\newtheorem{theorem}{Theorem}
\newtheorem{lemma}{Lemma}
\newtheorem{observation}{Observation}
\newtheorem{definition}{Definition}
\newcommand{\ugraph}{\mathcal{G}}
\newcommand{\cliqueprob}{clq}
\newcommand{\remove}[1]{}
\begin{document}
\date{}

\title{\bf Mining Maximal Cliques from an Uncertain Graph
}

\author{
Arko Provo Mukherjee\thanks{
Dept. of ECE, Iowa State University.
Email: arko@iastate.edu
}
\and 
Pan Xu\thanks{
Dept. of Computer Science, University of Maryland.
Email: panxu@cs.umd.edu
}
\and 
Srikanta Tirthapura\thanks{
Dept. of ECE, Iowa State University.
Email: snt@iastate.edu
}
}

\maketitle

\doublespacing

\begin{abstract} 
We consider mining dense substructures (maximal cliques) from an uncertain graph, which is a probability distribution on a set of deterministic graphs. For parameter $0 < \alpha < 1$, we consider the notion of an $\alpha$-maximal clique in an uncertain graph. We present matching upper and lower bounds on the number of $\alpha$-maximal cliques possible within a (uncertain) graph. We present an algorithm to enumerate $\alpha$-maximal cliques whose worst-case runtime is near-optimal, and an experimental evaluation showing the practical utility of the algorithm.
\end{abstract}

\section{Introduction}
\label{sec:intro}
Large datasets often contain information that is uncertain in nature. For example, given people $A$ and $B$, it may not be possible to definitively assert a relation of the form ``$A$ knows $B$'' using available information. Our confidence in such relations are commonly quantified using probability, and we say that the relation exists with a probability of $p$, for some value $p$ determined from the available information. In this work, we focus on {\em uncertain graphs}, where our knowledge is represented as a graph, and there is uncertainty in the presence of each edge in the graph. Uncertain graphs have been used extensively in modeling, for example, in communication networks~\cite{GNSC2007,BM2005,KTMN2005}, social networks~\cite{AR2007,GKRT2004,KKT2003,LK2003,KG2010,BBGT2012}, protein interaction networks~\cite{AKGR2004,BCRC2004,RTVMBKGPC2005}, and regulatory networks in biological systems~\cite{JTCS2006}.

Identification of dense substructures within a graph is a fundamental task, with numerous applications in data mining, including in clustering and community detection in social and biological networks~\cite{PDFV2005}, the study of the co-expression of genes under stress~\cite{Rokhlenko07}, integrating different types of genome mapping data~\cite{Harley01}. Perhaps the most elementary dense substructure in a graph, also probably the most commonly used, is a clique, a completely connected subgraph. Typically, we are interested in a {\em maximal clique}, which is a clique that is not contained within any other clique. Enumerating all maximal cliques from a graph is one of the most basic problems in graph mining, and has been used in many settings, including in finding overlapping communities from social networks~\cite{PDFV2005,ML2014,BKS1980,PYB2012}, finding overlapping multiple protein complexes~\cite{Gavin02}, analysis of email networks~\cite{PMS2006} and other problems in bioinformatics~\cite{HBG2001,GARW1993,ZPKS2008}.

While the notion of a dense substructure, including that of a maximal clique, as well as methods for enumerating them, are well understood in a deterministic graph, the same is not true in the case of an uncertain graph. This is an important open problem today, given that many datasets increasingly incorporate data that is noisy and uncertain in nature. Uncertainty can result from a lack of data. For example, in constructing a social network from data collected through sensors, some communications between individuals maybe missed, or maybe anonymized~\cite{AR2007}. In some cases, relationships themselves are probabilistic in nature; for example, the relation of one person influencing another in a social network~\cite{CWY2009}. In biological networks such as protein--protein interaction networks, it is known that there are frequent errors in finding interactions and our knowledge is best modeled probabilistically~\cite{AKGR2004}.

In this work, we consider the analog of a maximal clique in an uncertain graph. Intuitively, a clique in an uncertain graph is a set of vertices that has a high probability of being a completely connected subgrap. In other words, when we sample from the uncertain graph, this set is likely to form a (deterministic) clique. Finding such sets of vertices enables us to unearth robust communities within an uncertain graph, for example, a group of proteins such that it is likely that each protein interacts with each other protein. We present a systematic study of the problem of identifying such structures within an uncertain graph.

\remove{
and methods for finding maximal cliques are well understood in a deterministic graph, the same is not true for an uncertain graph. Intuitively, a clique in an uncertain graph is a set of vertices that is likely to be a completely connected subgraph, and a maximal clique is a set of vertices that is not contained within another set of.

One important example is the identification of overlapping communities 
from social and biological networks. It is known that finding maximal cliques
help to find such overlapping communities in a network~\cite{PDFV2005}. 
Most individuals participate in multiple communities in a social network
(family, friends, colleagues)~\cite{ML2014}. Similarly for biological networks 
many proteins can be a part of multiple protein complexes~\cite{Gavin02}. 
Finding only non--overlapping communities in a network loses these vital information.
Also we know that social and biological networks can be of uncertain 
nature. Uncertainties in social networks can arise out of data collected 
through sensors, anonymized communication data etc~\cite{AR2007}.
Further relationships such as influence of an actor over another in 
a social network can also only be modeled probabilistically~\cite{CWY2009}.
Similarly for biological networks such as protein--protein interaction
networks, it is known that there are frequent errors in finding these
interactions and hence is best modeled probabilistically~\cite{AKGR2004}. 
Thus finding overlapping communities in the context of uncertain graphs
is an important problem. However previous work on community 
detection on uncertain graphs can only find non overlapping 
communities~\cite{KPT2013}. 
}

\remove{
We ask the following question, ``What is the probability that a
set of vertices in an uncertain graph will be all connected to each
other if we generate a sample from the uncertain graph." 
If this probability is high, then we can say with ``high degree
of confidence" that the set of vertices in question are ``actually"
connected to each other in real life. 
Thus we look to find all such sets of vertices that 
have a high probability of being connected. For a user given
probability $\alpha$, we call this as \emph{enumerating $\alpha$-cliques}.
Finding $\alpha$-cliques with high $\alpha$, can for example, help us to 
identify groups of people in Twitter that all influence each other or
identify groups of interacting proteins from these uncertain networks.
}

\remove{
\emph{Take this out? All this is mentioned later in Section II.} \\
We consider a model of an uncertain graph where each
edge is assigned a probability of existence, and different edges are mutually
independent. An uncertain graph is a triple 
$\mathcal{G}= (V,E,p)$, where $V$ is a set of vertices, $E \subseteq V
\times V$ is a set of edges, and $p:E \rightarrow(0,1]$ is a
probability function that assigns a probability $p(e)$ to each edge $e
\in E$. Note that different edges can be assigned different
probabilities.
}

\subsection{Our Contributions}
First, we present a precise definition of a maximal clique in an uncertain graph, leading to the notion of an $\alpha$-maximal clique, for parameter $0 < \alpha \le 1$. A set of vertices $U$ in an uncertain graph is an $\alpha$-maximal clique if $U$ is a clique with probability at least $\alpha$, and there does not exist a vertex set $U'$ such that $U \subset U'$ and $U'$ is a clique with probability at least $\alpha$. When $\alpha=1$, the above definition reduces to the well understood notion of a maximal clique in a deterministic graph.

\paragraph{Number of Maximal Cliques}
We first consider a basic question on maximal cliques in an uncertain graph: {\em how many $\alpha$-maximal cliques can be present within an uncertain graph?} For deterministic graphs, this question was first considered by Moon and Moser~\cite{MM65} in 1965, who presented matching upper and lower bounds for the largest number of maximal cliques within a graph; on a graph with $n$ vertices, the largest possible number of maximal cliques is $3^{\frac{n}{3}}$\footnote{This assumes that $3$ divides $n$. If not, the expressions are slightly different}. For the case of uncertain graphs, we present the first matching upper and lower bounds for the largest number of $\alpha$-maximal cliques in a graph on $n$ vertices. We show that for any $0 < \alpha < 1$, the maximum number of $\alpha$-maximal cliques possible in an uncertain graph is $n \choose \lfloor n/2 \rfloor$, i.e.  there is an uncertain graph on $n$ vertices with $n \choose \lfloor n/2 \rfloor$ uncertain maximal cliques and no uncertain graph on $n$ vertices can have more than $n \choose \lfloor n/2 \rfloor$ $\alpha$-maximal cliques.

\paragraph{Algorithm for Enumerating Maximal Cliques}
We present a novel algorithm, {\em MULE} ({\em M}aximal {\em U}ncertain c{\em L}ique {\em E}numeration), for enumerating all $\alpha$-maximal cliques within an uncertain graph. MULE is based on a depth-first-search of the graph, combined with optimizations for limiting exploration of the search space, and a fast way to check for maximality based on an incremental computation of clique probabilities. We present a theoretical analysis showing that the worst-case runtime of MULE is $O(n \cdot 2^{n})$, where $n$ is the number of vertices. This is nearly the best possible dependence on $n$, since our analysis of the number of maximal cliques shows that the size of the output can be as much as $O(\sqrt{n} \cdot 2^{n})$. Note that such worst-case behavior occurs only in graphs that are very dense; for typical graphs, we can expect the runtime of MULE to be far better, as we show in our experimental evaluation. We also present an extension of MULE to efficiently enumerate only large maximal cliques. 

\paragraph{Experimental Evaluation}
We present an experimental evaluation of MULE using synthetic as well as real-world uncertain graphs. Our evaluation shows that MULE is practical and can enumerate maximal cliques in an uncertain graph with tens of thousands of vertices, more than hundred thousand edges and more than two million $\alpha$-maximal cliques. Interestingly, the observed runtime of this algorithm is proportional to the size of the output. The real-world graphs included a protein--protein interaction network, and a collaboration network inferred from DBLP. 

\subsection{Related Work}
\label{sec:related}
There has been much recent work in the database and data mining communities on mining from uncertain graphs, including shortest paths~\cite{YCW2010}, nearest neighbors~\cite{PBGK2010}, clustering~\cite{KPT2013}, enumerating frequent and reliable subgraphs~\cite{HT2008,ZLGZ2010,JLA2011,ZGL2010,LRAY2012,KBGG2014},
and distance-constrained reachability~\cite{JLDW2011}. Our problem of enumerating dense substructures is different from the problems mentioned above. In particular, the problem of finding reliable subgraphs is one of finding subgraphs that are connected with a high probability. However, these individual subgraphs may be sparse. In contrast, we are interested in finding subgraphs that are not just connected, but also fully connected with a high probability. 
The most closely related work to ours is on mining cliques from an uncertain graph by Zou et. al~\cite{ZJHS2010}. Our work is different from theirs in significant ways as elaborated below.
\begin{itemize}
\item While we focus on enumerating all $\alpha$-maximal cliques in a graph, they focus on a different problem, that of enumerating the $k$ cliques with the highest probability of existence.

\item We present bounds on the number of such cliques that could exist, while by definition, their problem requires them to output no more than $k$ cliques.

\item We provide a runtime complexity analysis of our algorithm and show that it is near optimal. No runtime complexity analysis was provided for the algorithm presented in~\cite{ZJHS2010}.

\item We also provide an algorithm to enumerate only large maximal uncertain cliques.
\end{itemize}

There is substantial prior work on maximal clique enumeration from a deterministic graph. A popular algorithm for maximal clique enumeration problem is the Bron-Kerbosch algorithm~\cite{BK1973}, based on depth-first-search. Tomita et al.~\cite{TTT06} improved the depth-first-search approach through a better strategy for pivot selection; their resulting algorithm runs in time $O(3^{\frac{n}3})$, which is worst-case optimal, due to the bound on the number of maximal cliques possible~\cite{MM65}. Further work on enumeration of maximal cliques includes~\cite{Cazals08,Eppstein11,Modani08,Tsukiyama77,Chiba85,Johnson88,Makino04}.

{\bf Roadmap.} We present a problem definition in Section~\ref{sec:problem} and bounds on the number of $\alpha$-maximal cliques in Section~\ref{sec:number}. We present an algorithm to enumerate all $\alpha$-maximal cliques in Section~\ref{sec:algorithms}, followed by experimental results in Section~\ref{sec:expts}.

\remove{
{\color{blue}
Hence managing uncertain datasets have been a recent focus of the
Database community~\cite{AJKO2008,DHY2009,FHOR2011}.
For this work we focus on graph datasets. 
Graphs represent relationships which can sometimes only be
quantified probabilistically.
}
}

\remove{
{\color{blue}
Although dense substructures like quasi--clique, k-core etc can be
more practical due to the fact that it doesn't need all the edges 
to be present among the set of participating vertices~\cite{PYB2012}, 
these structures are not as fundamental as a maximal clique. 
Since even this fundamental problem is not solved in the context of 
uncertain graphs, we focus on enumerating maximal cliques in this 
work and leave other dense substructures as important open problems.
}
}

\section{Problem Definition}
\label{sec:problem}

An uncertain graph is a probability distribution over a set of
deterministic graphs. 
We deal with undirected simple graphs, i.e. there are no self-loops or
multiple edges. An uncertain graph is a triple $\ugraph=(V,E,p)$, 
where $V$ is a set of vertices, $E \subseteq V \times V$ is
a set of (possible) edges, and $p:E \rightarrow(0,1]$ is a function
that assigns a probability of existence $p(e)$ to each edge $e \in E$.
As in prior work on uncertain graphs, we assume that the existence of
different edges are mutually independent events.

Let $n=|V|$ and $m=|E|$. Note that $\mathcal{G}$ is a
distribution over $2^m$ deterministic graphs, each of which is a
subgraph of the undirected graph $(V,E)$. This set of possible 
deterministic graphs is
called the set of ``possible graphs'' of the uncertain graph
$\ugraph$, and is denoted by $D(\ugraph)$.  Note that in order to
sample from an uncertain graph $\ugraph$, it is sufficient to sample
each edge $e \in E$ independently with a probability $p(e)$.

In an uncertain graph $\ugraph=(V,E,p)$, two vertices $u$ and $v$ are
said to be adjacent if there exists an edge $\{u,v\}$ in $E$. Let the
neighborhood of vertex $u$, denoted $\Gamma(u)$, be the set of all
vertices that are adjacent to $u$ in $\ugraph$. The next two 
definitions are standard, and apply not to uncertain graphs, but to
deterministic graphs.

\begin{definition}
\label{def:CliqueInPossibleGraph} 
A set of vertices $C \subseteq V$ is a clique in a graph $G=(V,E)$, if
every pair of vertices in $C$ is connected by an edge in $E$.
\end{definition}

\begin{definition}
\label{def:MaxCliqueInPossibleGraph} 
A set of vertices $M \subseteq V$ is a maximal clique in a graph $G=(V,E)$, if
(1)~$M$ is a clique in $G$ and 
(2)~There is no vertex $v \in V \setminus M$ such that $M \cup \{v\}$ is a clique in $G$.
\end{definition}

\begin{definition}
In an uncertain graph $\ugraph$, for a set of vertices $C \subseteq V$,
the clique probability of $C$, denoted by $\cliqueprob(C,\ugraph)$, is
defined as the probability that in a graph sampled from $\ugraph$, $C$
is a clique. For parameter $0 \le \alpha \le 1$, $C$ is called an 
$\alpha$-clique if $\cliqueprob(C,\ugraph) \ge \alpha$.
\end{definition}

For any set of vertices $C \subseteq V$, let $E_C$ denote the set of
edges $\{e=\{u,v\} | e \in E, u,v \in C \mbox{ and } u \neq v\}$, i.e.
the set of edges connecting vertices in $C$.

\begin{observation}
\label{obs:Prod}
For any set of vertices $C \subseteq V$ in $\ugraph=(V,E,p)$, 
such that $C$ is a clique in $G=(V,E)$,
$\cliqueprob(C,\ugraph) = \prod_{\substack{e \in E_C}} p(e)$.
\end{observation}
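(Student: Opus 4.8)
The plan is to reduce the probabilistic statement to a deterministic characterization of when $C$ forms a clique in a sampled graph, and then exploit the independence of the edges. First I would unpack the hypothesis that $C$ is a clique in $(V,E)$: by Definition~\ref{def:CliqueInPossibleGraph}, this means every pair of distinct vertices $u,v \in C$ is joined by an edge $\{u,v\} \in E$. Consequently the set $E_C = \{\{u,v\} \mid u,v \in C,\ u \neq v\}$ contains all $\binom{|C|}{2}$ pairs of vertices of $C$; there are no ``missing'' potential edges inside $C$. This is precisely what makes the hypothesis essential: if some pair in $C$ had no possible edge in $E$, then $C$ could never be a clique in any sampled graph, so its clique probability would be $0$, yet the product over $E_C$ would silently omit that pair and fail to vanish.

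Next I would characterize the relevant event. Fix a graph $G' = (V,E')$ sampled from $\ugraph$, where each $e \in E$ is included in $E'$ independently with probability $p(e)$. By Definition~\ref{def:CliqueInPossibleGraph} applied to $G'$, the set $C$ is a clique in $G'$ precisely when every pair $\{u,v\}$ with $u,v \in C$ and $u \neq v$ lies in $E'$. Since every such pair is an element of $E_C \subseteq E$ by the previous step, this event coincides exactly with the event that all edges of $E_C$ are present in the sample. Hence $\cliqueprob(C,\ugraph) = \prob{\bigwedge_{e \in E_C} (e \in E')}$.

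Finally I would invoke the independence assumption built into the model. Because the presence of distinct edges in the sample are mutually independent events, the probability of this conjunction factors as a product of the individual edge-presence probabilities, each of which is $p(e)$ by the definition of the sampling process. This yields $\cliqueprob(C,\ugraph) = \prod_{e \in E_C} p(e)$, as claimed.

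There is no substantial obstacle here; the statement is essentially a direct unwinding of the definitions. The only point that requires genuine care is the correct use of the clique hypothesis to guarantee that $E_C$ accounts for every pair of vertices in $C$, so that the deterministic clique condition on $G'$ translates without loss into ``all edges of $E_C$ are sampled.'' Everything else follows immediately from the mutual independence of edges assumed in the uncertain-graph model.
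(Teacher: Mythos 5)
Your proposal is correct and follows the same route as the paper's own proof: identify the event ``$C$ is a clique in the sampled graph'' with the event that every edge of $E_C$ is present, then factor the probability using the mutual independence of edges. Your additional remark on why the clique hypothesis on $(V,E)$ is needed (otherwise $E_C$ would omit a missing pair and the product would not vanish) is a careful elaboration of a point the paper leaves implicit, but it does not change the argument.
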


\begin{proof}
Let $G$ be a graph sampled from $\ugraph$. The set $C$ will be
a clique in $G$ iff every edge in $E_C$ is present in $G$. 
Since the events of selecting different edges are independent
of each other, the observation follows. 
\end{proof}

\begin{definition}
\label{def:MaxCliqueUncertainGraph2}
Given an uncertain graph $\ugraph=(V,E,p)$, and a parameter 
$0 \le \alpha \le 1$, a set $M \subseteq V$ is defined as an 
$\alpha$-maximal clique if 
(1)~$M$ is an $\alpha$-clique in $\ugraph$, and 
(2)~There is no vertex $v \in (V \setminus M)$ such that $M \cup \{v\}$ is an $\alpha$-clique in $\ugraph$.
\end{definition}

\begin{definition}
\label{def:MaxCliqueEnumProblem}
The {\bf Maximal Clique Enumeration} problem in an Uncertain Graph
$\ugraph$ is to enumerate all vertex sets $M \subseteq V$ such that
$M$ is an $\alpha$-maximal clique in $\ugraph$.
\end{definition}

The following two observations follow directly from Observation~\ref{obs:Prod}.
\begin{observation}
\label{obs:Sub}
For any two vertex sets $A,B$ in $\ugraph$, if $B \subset A$ then, 
$\cliqueprob(B,\ugraph) \ge \cliqueprob(A,\ugraph)$.
\end{observation}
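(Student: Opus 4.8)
The plan is to derive this directly from the product characterization of clique probability in Observation~\ref{obs:Prod}, which is exactly the hint that these observations ``follow from'' it. First I would record the edge-set monotonicity that drives everything: since $B \subset A$, any edge joining two vertices of $B$ also joins two vertices of $A$, so $E_B \subseteq E_A$.

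I would then case-split on whether $A$ is a clique in the underlying graph $(V,E)$. If $A$ is a clique in $(V,E)$, then any subset of a clique is again a clique, so $B$ is a clique in $(V,E)$ as well and Observation~\ref{obs:Prod} applies to both. Writing $\cliqueprob(B,\ugraph)=\prod_{e\in E_B}p(e)$ and $\cliqueprob(A,\ugraph)=\prod_{e\in E_B}p(e)\cdot\prod_{e\in E_A\setminus E_B}p(e)$, the two differ only by the extra factors indexed by $E_A\setminus E_B$. Since $p(e)\in(0,1]$ for every edge, each such factor is at most $1$, so appending them can only decrease the product; hence $\cliqueprob(B,\ugraph)\ge\cliqueprob(A,\ugraph)$. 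If instead $A$ is \emph{not} a clique in $(V,E)$, then some pair of vertices of $A$ is not joined even by a possible edge, so in no sampled graph can $A$ be a clique; thus $\cliqueprob(A,\ugraph)=0$ and the inequality holds trivially because a probability is nonnegative.

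A cleaner, formula-free alternative I could give instead is purely probabilistic and avoids the case split altogether: for any graph sampled from $\ugraph$, whenever $A$ is a clique its subset $B$ is automatically a clique, so the event ``$A$ is a clique'' is contained in the event ``$B$ is a clique''; monotonicity of probability under event inclusion immediately yields $\cliqueprob(A,\ugraph)\le\cliqueprob(B,\ugraph)$.

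There is no real difficulty here—the statement is essentially a one-line consequence of monotonicity. The only point needing care, and the reason a naive ``apply Observation~\ref{obs:Prod} to both sets'' does not literally work, is that Observation~\ref{obs:Prod} is stated only for vertex sets that \emph{are} cliques in $(V,E)$; this is precisely why the product-based argument must separately dispose of the case where $A$ fails to be a clique, and why the event-inclusion argument is arguably the more robust route.
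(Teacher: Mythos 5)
Your proposal is correct, and its main argument is exactly the route the paper intends: the paper offers no explicit proof, stating only that the observation ``follows directly from Observation~\ref{obs:Prod},'' and your product-formula argument makes that one-liner rigorous, including the case the paper silently skips --- that Observation~\ref{obs:Prod} is stated only for sets that are cliques in $(V,E)$, so the case $\cliqueprob(A,\ugraph)=0$ must be handled separately. Your alternative event-inclusion argument (the event ``$A$ is a clique in the sampled graph'' is contained in the event ``$B$ is a clique'') is cleaner still, avoids the case split and the independence assumption entirely, and would have been an equally valid --- arguably preferable --- justification.
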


\begin{observation}
\label{obs:Elim}
Let $C$ be an $\alpha$-clique in $\ugraph$. Then for all $e \in E_C$ 
we have $p(e) \geq \alpha$.
\end{observation}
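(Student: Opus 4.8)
The plan is to derive the conclusion directly from the product formula for clique probability established in Observation~\ref{obs:Prod}. First I would observe that since $C$ is an $\alpha$-clique, by definition $\cliqueprob(C,\ugraph) \ge \alpha$; when $\alpha > 0$ this probability is strictly positive, which forces every pair of vertices of $C$ to be joined by an edge in $E$. Indeed, if some pair of vertices of $C$ were not connected by an edge in $E$, that pair could never be simultaneously adjacent in a sampled graph, so $C$ could never be a clique and $\cliqueprob(C,\ugraph)$ would be $0 < \alpha$. Hence $C$ is a clique in the underlying deterministic graph $(V,E)$, Observation~\ref{obs:Prod} applies, and $\cliqueprob(C,\ugraph) = \prod_{e \in E_C} p(e)$, where $E_C$ consists of all the edges among vertices of $C$.

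The key step is then a one-line bound on a single factor of this product. Fix any edge $e' \in E_C$. Since $p(e) \le 1$ for every edge in $E$, the product of all the remaining factors satisfies $\prod_{e \in E_C,\, e \ne e'} p(e) \le 1$, and therefore
$$\alpha \le \cliqueprob(C,\ugraph) = p(e')\cdot \prod_{e \in E_C,\, e \ne e'} p(e) \le p(e').$$
As $e'$ was arbitrary, this establishes $p(e) \ge \alpha$ for every $e \in E_C$, which is exactly the claim.

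I do not anticipate a genuine obstacle here, since the statement is an immediate corollary of the multiplicativity in Observation~\ref{obs:Prod}: the clique probability is a product of factors each bounded above by $1$, so it can never exceed any single one of its factors. The only point deserving a little care is the treatment of the hypothesis of Observation~\ref{obs:Prod}, namely that $C$ must be a clique in the deterministic graph $(V,E)$ before the product formula is invoked; the positivity argument above supplies this. The boundary values of $\alpha$ are harmless: when $\alpha = 0$ the inequality $p(e) \ge 0$ holds trivially because $p$ maps into $(0,1]$, and when $\alpha = 1$ the same product argument forces $p(e) = 1$ for every $e \in E_C$, both consistent with the general bound.
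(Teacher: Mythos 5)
Your proposal is correct and takes the same route the paper intends: the paper offers no separate proof, stating only that Observation~\ref{obs:Elim} ``follows directly from Observation~\ref{obs:Prod},'' and your argument is precisely that derivation --- since $\cliqueprob(C,\ugraph) = \prod_{e \in E_C} p(e)$ with every factor at most $1$, the product can never exceed any single factor $p(e')$, so $\alpha \le \cliqueprob(C,\ugraph) \le p(e')$. Your additional care in verifying the hypothesis of Observation~\ref{obs:Prod} (that $C$ is a clique in $(V,E)$ whenever $\alpha > 0$) and in checking the boundary values of $\alpha$ is a sound filling-in of details the paper leaves implicit.
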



\section{Number of Maximal Cliques}
\label{sec:number}

The maximum number of maximal cliques in a deterministic graph on $n$
vertices is known exactly due to a result by Moon and Moser~\cite{MM65}. 
If $n \mod 3 = 0$, this number is $3^{\frac{n}{3}}$. If $n \mod 3 = 1$, then it 
is $4 \cdot 3^{\frac{n-4}{3}}$, and if $n \mod 3 = 2$, then it is $2 \cdot
3^{\frac{n-2}{3}}$. The graphs that have the maximum number of maximal
cliques are known as Moon-Moser graphs.

For uncertain cliques, no such bound was known so far. In this
section, we establish a bound on the maximum number of
$\alpha$-maximal cliques in an uncertain graph. 
For $0 < \alpha < 1$,
let $f(n,\alpha)$ be the maximum number of $\alpha$-maximal cliques in
any uncertain graph with $n$ nodes, without any assumption about the 
assignments of edge probabilities. 
The following theorem is the main result of this section.

\begin{theorem}\label{thm:maximum}
Let $n \ge 2$, and $0 < \alpha < 1$. Then: $f(n,\alpha)$ $=$ $n \choose {\lfloor n/2\rfloor}$
\end{theorem}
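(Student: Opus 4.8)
The plan is to establish the upper and lower bounds separately, and the crucial observation is that \emph{Sperner's theorem} is the right tool for the upper bound. First I would show that the collection of all $\alpha$-maximal cliques of $\ugraph$ forms an antichain in the Boolean lattice $2^V$. Suppose $M_1,M_2$ were distinct $\alpha$-maximal cliques with $M_1 \subsetneq M_2$. Choosing any $v \in M_2 \setminus M_1$, we have $M_1 \cup \{v\} \subseteq M_2$, so Observation~\ref{obs:Sub} gives $\cliqueprob(M_1 \cup \{v\},\ugraph) \ge \cliqueprob(M_2,\ugraph) \ge \alpha$. Thus $M_1 \cup \{v\}$ is an $\alpha$-clique, contradicting condition~(2) of the $\alpha$-maximality of $M_1$. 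Since the $\alpha$-maximal cliques form an antichain of subsets of an $n$-element set, Sperner's theorem bounds their number by $\binom{n}{\lfloor n/2\rfloor}$, yielding $f(n,\alpha) \le \binom{n}{\lfloor n/2\rfloor}$.

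For the matching lower bound I would exhibit a single uncertain graph attaining this count. Take the complete graph $G = K_n$ with every edge assigned the same probability $p \in (0,1)$, to be fixed. By Observation~\ref{obs:Prod}, any vertex set $C$ of size $j$ satisfies $\cliqueprob(C,\ugraph) = p^{\binom{j}{2}}$, depending only on $|C|$. Writing $k = \lfloor n/2\rfloor$, the goal is to make every $k$-subset an $\alpha$-clique and every $(k+1)$-subset fail to be one, i.e. to arrange
$$p^{\binom{k}{2}} \ge \alpha > p^{\binom{k+1}{2}}.$$
If this holds, then every $k$-subset is $\alpha$-maximal (it is an $\alpha$-clique, yet adjoining any vertex produces a $(k+1)$-set that is not), while by the monotonicity of $\cliqueprob$ in Observation~\ref{obs:Sub} no set of size exceeding $k$ is an $\alpha$-clique and no set of size below $k$ is maximal. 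Hence the $\alpha$-maximal cliques are exactly the $\binom{n}{\lfloor n/2\rfloor}$ subsets of size $k$, which together with the upper bound gives equality.

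The remaining step is to verify that a suitable $p$ exists, and this is where I expect the only real care to be needed. Rewriting the two inequalities, I need $p \in \big[\alpha^{1/\binom{k}{2}},\, \alpha^{1/\binom{k+1}{2}}\big)$, where the left endpoint is vacuous (simply $p>0$) in the degenerate case $\binom{k}{2}=0$ that occurs for $n \in \{2,3\}$. Because $0 < \alpha < 1$ and $\binom{k+1}{2} > \binom{k}{2}$, the map $x \mapsto \alpha^{1/x}$ is increasing on $x>0$, so this interval is nonempty and contained in $(0,1)$, and any $p$ in it works. The main obstacle is thus not conceptual once the antichain/Sperner connection is spotted, but lies in the lower-bound construction: selecting $p$, checking the two clique-probability inequalities simultaneously, and treating the small-$n$ cases with $\binom{k}{2}=0$. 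Note that the hypothesis $\alpha<1$ is used precisely to keep this interval from collapsing (and $\alpha>0$ to keep $p>0$), which also explains why the answer departs from the Moon--Moser bound at $\alpha=1$.
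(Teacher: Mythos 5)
Your proposal is correct, and its overall skeleton matches the paper's: both split the claim into a lower-bound construction and an antichain-based upper bound, and your antichain observation (distinct $\alpha$-maximal cliques cannot be nested, via Observation~\ref{obs:Sub}) is exactly the paper's reduction to a ``non-redundant collection'' (Definition~\ref{defn:non-redundant}, Lemma~\ref{lem:UpperBound}). The genuine divergence is what happens next: you invoke Sperner's theorem as a known result, whereas the paper never names Sperner and instead reproves the extremal bound from scratch across Lemmas~\ref{lem:maxiindep}, \ref{lem:Lowe-n} and~\ref{lem:Upper-n} --- it builds a comparability graph $\widehat{G}$ on all subsets of $V$, identifies non-redundant collections with independent sets, and uses bipartite degree-counting (to push the minimum set size up to $\lfloor n/2\rfloor$) together with a complementation/dual-collection argument (to pull the maximum size down), i.e.\ a self-contained proof of Sperner's theorem. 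Your route buys brevity and places the bound where it belongs in classical extremal set theory; the paper's buys self-containment at the cost of roughly three lemmas. On the lower bound the constructions are essentially identical --- complete graph, uniform edge probability --- except that the paper fixes the probability exactly via $q^{\kappa}=\alpha$ with $\kappa=\binom{n/2}{2}$ (even $n$), verifies $n=2$ separately, and only sketches the odd case, while your interval $\bigl[\alpha^{1/\binom{k}{2}},\,\alpha^{1/\binom{k+1}{2}}\bigr)$ with $k=\lfloor n/2\rfloor$ handles all $n\ge 2$ uniformly, including the degenerate cases $n\in\{2,3\}$ where $\binom{k}{2}=0$; both choices are valid since the definition of an $\alpha$-clique requires only $\cliqueprob(C,\ugraph)\ge\alpha$, and your verification that the interval is nonempty and contained in $(0,1)$ is sound.
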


\begin{proof}
We can easily verify that the theorem holds for $n=2$.
for $n \ge 3$, let $g(n) = {n \choose {\lfloor n/2\rfloor}}$. 
We show $f(n,\alpha)$ is at least $g(n)$ in Lemma~\ref{lem:maxCliq}, and then show that
$f(n,\alpha)$ is no more than $g(n)$ in Lemma~\ref{lem:UpperBound}.
\end{proof}

\begin{lemma}
\label{lem:maxCliq}
For any $n \ge 3$, and any $\alpha, 0 < \alpha < 1$, there exists an
uncertain graph $\ugraph = (V,E,p)$ with $n$ nodes which has $g(n)$
$\alpha$-maximal cliques.
\end{lemma}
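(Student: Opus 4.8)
The plan is to exhibit a single explicit uncertain graph achieving the bound, namely the complete graph on $n$ vertices with a \emph{uniform} edge probability $p$, where $p$ is chosen carefully as a function of $n$ and $\alpha$. Since the underlying graph $(V,E)$ is complete, every vertex subset $C$ is a clique in $(V,E)$, so Observation~\ref{obs:Prod} applies and gives $\cliqueprob(C,\ugraph) = p^{\binom{|C|}{2}}$. Thus the clique probability depends only on the cardinality $|C|$ and decreases monotonically as $|C|$ grows. Consequently there is a single threshold size $k$ such that a subset $C$ is an $\alpha$-clique precisely when $|C| \le k$; concretely, $k$ is the largest integer with $p^{\binom{k}{2}} \ge \alpha$.

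Next I would argue that, with this structure, the $\alpha$-maximal cliques are exactly the subsets of size $k$. A subset of size $k$ cannot be extended, since adding any vertex produces a set of size $k+1$ whose clique probability $p^{\binom{k+1}{2}}$ is strictly below $\alpha$; hence it is $\alpha$-maximal. Conversely, any $\alpha$-clique of size strictly less than $k$ can be extended by an arbitrary new vertex (the graph is complete), and the resulting set of size at most $k$ remains an $\alpha$-clique, so it is not $\alpha$-maximal. Sets of size exceeding $k$ are not $\alpha$-cliques at all. Therefore the number of $\alpha$-maximal cliques equals $\binom{n}{k}$, and to match $g(n) = \binom{n}{\lfloor n/2 \rfloor}$ it suffices to force $k = \lfloor n/2 \rfloor$.

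It remains --- and this is the only genuinely delicate step --- to show that $p$ can always be chosen so that the threshold lands exactly at $k = \lfloor n/2 \rfloor$. Setting $L(p) = \frac{\log\alpha}{\log p}$, the requirement $p^{\binom{k}{2}} \ge \alpha > p^{\binom{k+1}{2}}$ rearranges (dividing by the negative quantity $\log p$) into $\binom{k}{2} \le L(p) < \binom{k+1}{2}$. For $0 < \alpha < 1$ both $\log\alpha$ and $\log p$ are negative, so $L(p) > 0$, and as $p$ ranges over $(0,1)$ the quantity $L(p)$ varies continuously from $0^{+}$ (as $p \to 0^{+}$) to $+\infty$ (as $p \to 1^{-}$). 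Hence $L(p)$ attains every value in the half-open interval $[\binom{k}{2}, \binom{k+1}{2})$, so an admissible $p$ exists. Choosing such a $p$ with $k = \lfloor n/2 \rfloor$ completes the construction and yields exactly $\binom{n}{\lfloor n/2 \rfloor} = g(n)$ $\alpha$-maximal cliques. The main thing to verify carefully is the boundary behavior of this interval argument --- for instance the degenerate case $k=1$ arising for small $n$, where $\binom{k}{2} = 0$ and the single-vertex sets have clique probability $p^{0}=1$ --- but the monotonicity and continuity of $L(p)$ handle all such cases uniformly.
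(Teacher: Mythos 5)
Your proposal is correct and takes essentially the same route as the paper: the paper's construction is exactly the complete graph on $n$ vertices with a uniform edge probability $q$ chosen so that the $\alpha$-clique threshold falls at size $\lfloor n/2\rfloor$ (for even $n$ it sets $q^{\kappa}=\alpha$ with $\kappa=\binom{n/2}{2}$, making the $n/2$-subsets exactly the $\alpha$-maximal cliques). Your intermediate-value argument for choosing $p$ is only a minor variation on this explicit choice, though it does have the small virtue of uniformly covering odd $n$ and the degenerate threshold $k=1$ (e.g.\ $n=3$, where $\binom{k}{2}=0$ and one must simply take $p<\alpha$), cases the paper dispatches with ``a similar proof applies when $n$ is odd.''
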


\begin{proof}
First, we assume that $n$ is even. 
Consider $\ugraph = (V,E,p)$, where
$E = V \times V$. Let $\mbox{$\kappa = {{n/2} \choose 2}$}$.
For each $e \in E$, let $p(e)=q$ where $q^{\kappa}=\alpha$.
We have $0 <q <1$ since $0< \alpha <1$.
Let $S$ be an arbitrary subset of $V$ such that  $|S|=n/2$.
We can verify that $S$ is an $\alpha$-maximal clique since
(1)~the probability that $S$ is a clique is $q^{\kappa}=\alpha$ and 
(2)~for any set $S^\prime \supsetneq S, S^\prime \subseteq V$,
the probability that $S^\prime$ is a clique is at most $q q^{\kappa}=q\alpha <\alpha$.
We can also observe that for any subset $S \subseteq V$, $S$ cannot be an
$\alpha$-maximal clique if $|S|< n/2$ or $|S|> n/2$. Thus
we conclude that a subset $S\subseteq V$ is an $\alpha$-maximal clique iff
$|S|=n/2$ which implies that the total number of $\alpha$-maximal
cliques in $\ugraph$ is $n \choose {n/2}$.
A similar proof applies when $n$ is odd. 
\end{proof}

Note that our construction in the Lemma above employs the condition
that $n \ge 3$ and $0<\alpha <1$. When $\alpha = 1$, the upper bound
is from the result of Moon and Moser for deterministic graphs, and in this
case $f(n,\alpha) = 3^{\frac{n}{3}}$ and is smaller than $g(n)$.
Next we present a useful definition required for proving the next Lemma.

\begin{definition}
\label{defn:non-redundant}
A collection of sets $\mathcal{C}$
is said to be non-redundant if for any pair $S_1, S_2 \in \mathcal{C}$,
$S_1 \ne S_2$, we have $S_1 \nsubseteq S_2$ and $S_2 \nsubseteq S_1$.
\end{definition}

\begin{lemma}
\label{lem:UpperBound}
$g(n)$ is an upper bound on $f(n,\alpha)$.
\end{lemma}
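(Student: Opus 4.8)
The plan is to reduce this lemma to a purely combinatorial fact about set systems, namely Sperner's theorem. The bridge is the observation that, for any uncertain graph $\ugraph$ on $n$ vertices, the collection of \emph{all} its $\alpha$-maximal cliques is a non-redundant collection of subsets of $V$ in the sense of Definition~\ref{defn:non-redundant}. Once this is established, the bound is immediate: $|V|=n$, and the largest non-redundant (antichain) family of subsets of an $n$-element set has size exactly $\binom{n}{\lfloor n/2\rfloor}=g(n)$ by Sperner's theorem.

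First I would prove the non-redundancy claim. Let $\mathcal{M}$ denote the set of all $\alpha$-maximal cliques of $\ugraph$, and suppose for contradiction that there are distinct $M_1,M_2\in\mathcal{M}$ with $M_1\subseteq M_2$; since they are distinct the inclusion is proper, so I may pick a vertex $v\in M_2\setminus M_1$. Then $M_1\cup\{v\}\subseteq M_2$, and by Observation~\ref{obs:Sub} we get $\cliqueprob(M_1\cup\{v\},\ugraph)\ge \cliqueprob(M_2,\ugraph)\ge\alpha$, where the last inequality holds because $M_2$ is an $\alpha$-clique. Hence $M_1\cup\{v\}$ is an $\alpha$-clique with $v\in V\setminus M_1$, contradicting condition~(2) of Definition~\ref{def:MaxCliqueUncertainGraph2} applied to $M_1$. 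Therefore no member of $\mathcal{M}$ is contained in another, i.e.\ $\mathcal{M}$ is non-redundant.

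The remaining step is to bound the cardinality of an arbitrary non-redundant family. A non-redundant collection of subsets of $V$ is precisely an antichain in the subset lattice of $V$, and Sperner's classical theorem asserts that every antichain in the lattice of subsets of an $n$-element set has size at most $\binom{n}{\lfloor n/2\rfloor}$. Applying this to $\mathcal{M}$ yields $|\mathcal{M}|\le\binom{n}{\lfloor n/2\rfloor}=g(n)$, and since $\ugraph$ was arbitrary this gives $f(n,\alpha)\le g(n)$, as required.

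I expect the main (and essentially only) conceptual step to be recognizing that $\alpha$-maximality forces the antichain property via the monotonicity of clique probability in Observation~\ref{obs:Sub}; the counting step is then a direct appeal to Sperner's theorem rather than any fresh computation. One detail worth flagging is that the argument above never uses the lower bound on $\alpha$ nor the assumption $n\ge 3$, so the upper bound in fact holds uniformly for all $n$ and all $\alpha\in[0,1]$ (it is simply not tight when $\alpha=1$, where the smaller Moon--Moser bound takes over).
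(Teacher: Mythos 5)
Your proof is correct, and it takes a genuinely shorter route than the paper's. Both arguments share the same bridge: the collection of all $\alpha$-maximal cliques is a non-redundant collection (an antichain) in the sense of Definition~\ref{defn:non-redundant}. But where the paper then reproves the extremal bound from scratch --- encoding antichains as independent sets of the inclusion graph $\widehat{G}$ and establishing $|\mathcal{C}^*| = g(n)$ through Lemmas~\ref{lem:maxiindep}, \ref{lem:Lowe-n}, and~\ref{lem:Upper-n} via degree-counting and exchange arguments (which together amount to a self-contained proof of Sperner's theorem) --- you simply cite Sperner's theorem as a classical result. This buys brevity and modularity at the cost of self-containment; the paper's version buys a from-first-principles exposition whose machinery occupies roughly two pages. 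Notably, your treatment of the antichain step is actually \emph{more} rigorous than the paper's: the paper asserts that non-redundancy follows ``by the definition of $\alpha$-maximal cliques,'' but Definition~\ref{def:MaxCliqueUncertainGraph2} only forbids single-vertex extensions, so one genuinely needs your argument --- pick $v \in M_2 \setminus M_1$ and apply the monotonicity of $\cliqueprob$ (Observation~\ref{obs:Sub}) to $M_1 \cup \{v\} \subseteq M_2$ --- to rule out $M_1 \subsetneq M_2$. Your closing remark is also accurate: neither $n \ge 3$ nor $\alpha < 1$ is used, so the upper bound $\binom{n}{\lfloor n/2 \rfloor}$ holds for all $\alpha \in [0,1]$; it is merely not tight at $\alpha = 1$, where the Moon--Moser bound applies.
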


\begin{proof}
Let $\mathcal{C}^{\alpha}(\mathcal{G})$
be the collection of all $\alpha$-maximal cliques in $\mathcal{G}$.
Note that by the definition of $\alpha$-maximal cliques, any
$\alpha$-maximal clique $S$ in $\ugraph$ can not be a proper subset of
any other $\alpha$-maximal clique in $\ugraph$.
Thus from Definition~\ref{defn:non-redundant}, for any uncertain graph 
$\mathcal{G}$, $\mathcal{C}^{\alpha}(\mathcal{G})$ is a non-redundant collection.
Hence, it is clear that the largest number of $\alpha$-maximal cliques in $\ugraph$ 
should be upper bounded by the size of a largest non-redundant collection of subsets of $V$.

Let $\mathcal{C}$ be the collection of all subsets of $V$.
Based on $\mathcal{C}$, we construct such an undirected graph
$\widehat{G}=(\mathcal{C},\widehat{E})$
where for any two nodes $S_1 \in \mathcal{C}, S_2 \in \mathcal{C} $,
there is an edge connecting $S_1$ and $S_2$ iff $S_1 \subseteq S_2$ or $S_2 \subseteq S_1$.
It can be verified that a sub-collection $\mathcal{C}^\prime \subseteq \mathcal{C}$
is a non-redundant iff $\mathcal{C}^\prime$ is an independent set in $\widehat{G}$.
In Lemma~\ref{lem:maxiindep}, we show that $g(n)$
is the size of a largest independent set of $\widehat{G}$,
which implies that $g(n)$ is an upper bound for the number of 
$\alpha$-maximal cliques in $\ugraph$. 
\end{proof}

Let $\mathcal{C}^*$ be a largest independent set in $\widehat{G}$.
Also, let $ \mathcal{C}_k\subseteq \mathcal{C} , 0 \le k \le n$ be the collection
of subsets of $V$ with the size of $k$. Observe that for each $0 \le k \le n$,
$\mathcal{C}_k$ is an independent set of $\widehat{G}$.
Also let $L(n)$ and $U(n)$ be respectively the minimum and maximum size of sets in $\mathcal{C}^*$.
We can show that $L(n)$ and $U(n)$ can be bounded as shown in Lemma~\ref{lem:Lowe-n} and Lemma~\ref{lem:Upper-n} respectively. 


\begin{lemma}\label{lem:maxiindep}
For any $n \ge 3$, $\left | \mathcal{C}^* \right | = g(n)$.
\end{lemma}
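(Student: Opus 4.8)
The plan is to recognize that this lemma is exactly Sperner's theorem in disguise. By the construction of $\widehat{G}$, two distinct subsets of $V$ are adjacent precisely when one contains the other, so an independent set in $\widehat{G}$ is exactly an antichain of the Boolean lattice on $V$ — a non-redundant collection in the sense of Definition~\ref{defn:non-redundant}. Thus $\mathcal{C}^*$ is a largest antichain of $2^V$, and the claim $|\mathcal{C}^*| = g(n)$ is precisely the statement that the largest antichain in $2^V$ has size $\binom{n}{\lfloor n/2\rfloor}$. The lower bound is immediate: the collection $\mathcal{C}_{\lfloor n/2\rfloor}$ of all subsets of $V$ of size $\lfloor n/2\rfloor$ is an antichain, and it has exactly $g(n)$ members, so $|\mathcal{C}^*| \ge g(n)$.

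For the upper bound I would use Lubell's chain-counting argument. Call a sequence $\emptyset = A_0 \subset A_1 \subset \cdots \subset A_n = V$ with $|A_i| = i$ a maximal chain; there are exactly $n!$ of them, one for each ordering of $V$. A fixed subset $S$ with $|S| = k$ lies on exactly $k!\,(n-k)!$ maximal chains, obtained by ordering $S$ and then ordering $V \setminus S$. Because any two members of $\mathcal{C}^*$ are incomparable, a single maximal chain contains at most one member of $\mathcal{C}^*$. Counting incidences between maximal chains and members of $\mathcal{C}^*$ therefore gives
$$\sum_{S \in \mathcal{C}^*} |S|!\,(n-|S|)! \;\le\; n!,$$
which, after dividing by $n!$, is the LYM inequality $\sum_{S \in \mathcal{C}^*} 1/\binom{n}{|S|} \le 1$. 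Since $\binom{n}{k}$ is maximized at $k = \lfloor n/2\rfloor$, every term satisfies $1/\binom{n}{|S|} \ge 1/g(n)$, so $|\mathcal{C}^*|/g(n) \le 1$, i.e. $|\mathcal{C}^*| \le g(n)$. Combining the two bounds yields $|\mathcal{C}^*| = g(n)$.

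The only real work is justifying the two combinatorial counts (that there are $n!$ maximal chains and that each $k$-set lies on $k!\,(n-k)!$ of them) together with the monotonicity fact $\binom{n}{k} \le \binom{n}{\lfloor n/2\rfloor}$; all three are routine. An alternative route that fits this paper's scaffolding with $L(n)$ and $U(n)$ would avoid LYM entirely and instead show, via a normalized-matching (Hall-type) property between consecutive levels of the lattice, that a largest antichain can be taken to lie in the single middle level $\mathcal{C}_{\lfloor n/2\rfloor}$, forcing $L(n) = U(n) = \lfloor n/2\rfloor$ and hence $|\mathcal{C}^*| = |\mathcal{C}_{\lfloor n/2\rfloor}| = g(n)$; this is presumably the content of Lemma~\ref{lem:Lowe-n} and Lemma~\ref{lem:Upper-n}. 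The main obstacle in that second route is establishing the level-to-level matchings, which is considerably more involved than the one-line chain count, so I would favor the Lubell argument as the cleaner self-contained proof.
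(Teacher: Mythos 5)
Your proof is correct: you rightly identify the lemma as Sperner's theorem, and the Lubell chain-counting argument you give (each of the $n!$ maximal chains meets the antichain $\mathcal{C}^*$ at most once, a $k$-set lies on $k!\,(n-k)!$ chains, hence $\sum_{S\in\mathcal{C}^*} 1/\binom{n}{|S|}\le 1$ and $|\mathcal{C}^*|\le g(n)$, with the middle level supplying the matching lower bound) is complete and airtight. However, it is genuinely different from the paper's route, which is the second path you sketched: the paper proves via Lemma~\ref{lem:Lowe-n} that the smallest set in $\mathcal{C}^*$ has size at least $\lfloor n/2\rfloor$ (by an exchange argument --- replace each smallest-level set $S$ by all its one-element extensions, then show via double counting of edges in the bipartite graph between levels $\ell$ and $\ell+1$, using degrees $n-\ell$ and at most $\ell+1$, that this strictly enlarges the independent set when $\ell\le n/2-1$), and via Lemma~\ref{lem:Upper-n} (complementation duality $S\mapsto V\setminus S$) that the largest set has size at most $\lceil n/2\rceil$; for even $n$ this pins $\mathcal{C}^*$ to the middle level, while for odd $n$ the paper needs an extra bipartite degree-counting step between levels $(n-1)/2$ and $(n+1)/2$ to bound $|\mathcal{C}^*_{n_1}|+|\mathcal{C}^*_{n_2}|\le\binom{n}{n_2}$. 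Note the paper's level-to-level step is plain edge double counting, not the normalized-matching/Hall-type machinery you anticipated, so it is lighter than you feared, though still longer than your argument. Your LYM proof buys brevity and a single unified treatment of even and odd $n$; the paper's proof buys self-containedness at a very elementary level (only degree counting in bipartite graphs) and, through the explicit bounds $L(n)$ and $U(n)$, structural information about where a maximum antichain must live, which the LYM inequality alone does not directly exhibit.
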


\begin{proof}
We first consider the case when $n$ is even.
By Lemmas~\ref{lem:Lowe-n} and~\ref{lem:Upper-n}, we know
$n/2 \le L(n) \le U(n) \le n/2$. Thus we have
$L(n)=U(n)=n/2$ which implies $ \mathcal{C}^*= \mathcal{C}^*_{n/2}$.
Recall that $ \mathcal{C}_k\subseteq \mathcal{C} , 0 \le k \le n$ is the collection
of subsets of $V$ with the size of $k$.

We have (1)~$\mathcal{C}^*= \mathcal{C}^*_{n/2} \subseteq \mathcal{C}_{n/2}$ and
(2)~$|\mathcal{C}^*| \ge |\mathcal{C}_{n/2}| $ since
$\mathcal{C}^*$ is a largest independent set of $\widehat{G}$.
Thus we conclude $\mathcal{C}^*=\mathcal{C}_{n/2}$ which has the size
of ${n \choose (n/2)} = g(n)$.

We next consider the case when $n$ is odd. From
Lemmas~\ref{lem:Lowe-n} and~\ref{lem:Upper-n}, we know
$(n-1)/2 \le L(n) \le U(n) \le (n+1)/2$. Thus we have
$ \mathcal{C}^*= \mathcal{C}^*_{(n-1)/2} \bigcup \mathcal{C}^*_{(n+1)/2}$.
For notation convenience, we set $n_1=(n-1)/2, n_2=(n+1)/2$.
Let $\widehat{G}(\mathcal{C}_{n_1}, \mathcal{C}_{n_2})$ be the subgraph of $\widehat{G}$
induced by $\mathcal{C}_{n_1} \cup \mathcal{C}_{n_2}$. We can view
$\widehat{G}(\mathcal{C}_{n_1}, \mathcal{C}_{n_2})$ as a bipartite graph with two
disjoint vertex sets $\mathcal{C}_{n_1}$ and $\mathcal{C}_{n_2}$ respectively.
Observe that $\mathcal{C}^*_{n_1} \subseteq \mathcal{C}_{n_1}$
 and $\mathcal{C}^*_{n_2} \subseteq\mathcal{C}_{n_2}$.
Let $\widehat{E} (\mathcal{C}^*_{n_1})$
be the set of edges induced by $\mathcal{C}^*_{n_1}$ in
$\widehat{G}(\mathcal{C}_{n_1}, \mathcal{C}_{n_2})$. Since
$\mathcal{C}^*$ is an independent set of $\widehat{G}$,
none of the edges in $\widehat{E} (\mathcal{C}^*_{n_1})$
will have an end in a node of $\mathcal{C}^*_{n_2}$, i.e,
all the edges of $\widehat{E} (\mathcal{C}^*_{n_1})$ should have
an end  falling in $\mathcal{C}_{n_2} \setminus \mathcal{C}^*_{n_2}$.
Note that in $\widehat{G}(\mathcal{C}_{n_1}, \mathcal{C}_{n_2})$, all nodes
have a degree of $n_2$. Thus we have:
$$
|\widehat{E} (\mathcal{C}^*_{n_1})|=|\mathcal{C}^*_{n_1}|*n_2 \le |\mathcal{C}_{n_2} \setminus \mathcal{C}^*_{n_2}|*n_2= (|\mathcal{C}_{n_2}|-|\mathcal{C}^*_{n_2}|)*n_2
$$
from which we obtain $|\mathcal{C}^*|=|\mathcal{C}^*_{n_1}|+|\mathcal{C}^*_{n_2}| \le |\mathcal{C}_{n_2}|$ $=$ ${n} \choose n_2$.
Note that $\mathcal{C}_{n_2}$ itself is an independent set of
$\widehat{G}$ with size $n \choose n_2$.
Thus we conclude that $|\mathcal{C}^*|$ $=$ ${n} \choose n_2$ $= g(n)$.
\end{proof}

\begin{lemma}\label{lem:Lowe-n}
$L(n) \ge \lfloor n/2\rfloor$
\end{lemma}

\begin{proof}
Let us assume $n$ is an even number. We prove by contradiction as follows.
Suppose $L(n)=\ell \le n/2-1$. Let $\mathcal{C}^*_k \subseteq \mathcal{C}^*, L(n)\le k \le U(n)$ be
the collection of all sets in $\mathcal{C}^*$ which has the size of $k$, i.e,
$\mathcal{C}^*_k =\{S \in \mathcal{C}^*| |S|=k\}$. In the following we construct
a new collection $\mathcal{C}_{new} \subseteq \mathcal{C} $ which proves to be an
independent set in $\widehat{G}$ with the size being strictly larger than $\mathcal{C}^*$.
For each $S \in \mathcal{C}^*_\ell$, we add to $\mathcal{C}^*$ all subsets of $V$ which has the form
as $S\cup \{i\}$ where $i \in V \setminus S$ and remove $S$ from $\mathcal{C}^*$ meanwhile.
Let $\mathcal{C}_{new} $  be the collection obtained after we process the same route for all $S \in  \mathcal{C}^*_\ell$.
Mathematically, we have: $\mathcal{C}_{new}=  \mathcal{C}_1 \bigcup \mathcal{C}_2$ where 
$\mathcal{C}_1=\bigcup_{S\in \mathcal{C}^*_\ell } \bigcup_{i \in V \setminus S} \{ S\cup \{i\} \}, \mathcal{C}_2=\mathcal{C}^* \setminus \mathcal{C}^*_\ell$. 
First we show $\mathcal{C}_{new} $ is an independent set of $\widehat{G}$. Arbitrarily choose two distinct
sets, say $S_1 \in \mathcal{C}_{new}, S_2 \in \mathcal{C}_{new}, S_1 \ne S_2$.
We check all the possible cases one by one:
\begin{itemize}
  \item $S_1 \in \mathcal{C}_1, S_2 \in \mathcal{C}_1$. We observe that $|S_1|=|S_2|=\ell+1$ and $S_1 \ne S_2$. Thus
  no inclusion relation could exist between $S_1$ and $S_2$.
  \item $S_1 \in \mathcal{C}_2, S_2 \in \mathcal{C}_2$. In this case no inclusion relation can exist between $S_1$ and $S_2$
  since $\mathcal{C}_2$ is an independent set of $\widehat{G}$.
  \item $S_1 \in \mathcal{C}_1, S_2 \in \mathcal{C}_2$. Since $\mathcal{C}^*_\ell$ is the collection
  of sets in $\mathcal{C}^*$ which has the smallest size $\ell$, we get that $|S_2| \ge \ell+1=|S_1|$.
 Therefore there is only one possible inclusion relation existing here, that is $S_1 \subset S_2$. Suppose
 $S_1= S_1^\prime \cup \{i_1\} \subset S_2 $ for some $S_1^\prime \in \mathcal{C}^*_\ell$. Thus we get that
 $S_1^\prime  \subset S_2$ which implies  $\mathcal{C}^*$ is not an independent set of $\widehat{G}$. Hence
 we conclude that no inclusion relation could exist between $S_1$ and $S_2$.
\end{itemize}

Summarizing the analysis above, we get that no inclusion relation could exist between $S_1$ and $S_2$
which yields $\mathcal{C}_{new} $ is an independent set of $\widehat{G}$.

Now we prove that $|\mathcal{C}_{new} |>|\mathcal{C}^* | $. Observe that
$\mathcal{C}_1$ and $\mathcal{C}_2$ are disjoint from each other; otherwise
 $\mathcal{C}^*$ is not an independent set. So we have $|\mathcal{C}_{new}|=|\mathcal{C}_1|+|\mathcal{C}_2|$.
Note that $|\mathcal{C}^*|=|\mathcal{C}^*_\ell|+|\mathcal{C}_2|$ since
$\mathcal{C}^*$ is the union of the two disjoint parts $\mathcal{C}^*_\ell$ and $\mathcal{C}_2$.
Therefore
$|\mathcal{C}_{new} |>|\mathcal{C}^* | $ is equivalent to $|\mathcal{C}_1 |>|\mathcal{C}^*_\ell | $.
Let $ \widehat{G}(\mathcal{C}^*_\ell, \mathcal{C}_1)$ be the
induced subgraph graph of $\widehat{G}$ by $\mathcal{C}^*_\ell \bigcup \mathcal{C}_1$.
Note that $ \widehat{G}(\mathcal{C}^*_\ell, \mathcal{C}_1)$ can be viewed as
a bipartite graph where the two disjoint vertex sets are $\mathcal{C}^*_\ell $ and $\mathcal{C}_1$
respectively.
In $ \widehat{G}(\mathcal{C}^*_\ell, \mathcal{C}_1)$
we observe that (1) for each node $S_1 \in \mathcal{C}^*_\ell$, its degree
$d(S_1)=n-\ell$; (2) for each node $S_2 \in \mathcal{C}_1$, its degree $d(S_2) \le \ell+1$.
Thus we get that $ |\widetilde{E}|=|\mathcal{C}^*_\ell| (n-\ell) \le |\mathcal{C}_1|(\ell+1) $.
According to our assumption we have $\ell \le n/2-1$.
Thus we have \\ $|\mathcal{C}^*_\ell|/|\mathcal{C}_1| \le (\ell+1)/ (n-\ell) \le (n/2)/(n/2+1)<1$,
 yielding $|\mathcal{C}^*_\ell|< |\mathcal{C}_1|$ which is equivalent to
$|\mathcal{C}^* | < |\mathcal{C}_{new} | $.

So far we have successfully constructed a new collection $\mathcal{C}_{new} \subseteq \mathcal{C}$ 
such that (1) it is an independent set of $\widehat{G}$ and (2) $ |\mathcal{C}_{new} |>|\mathcal{C}^* |$.
That contradicts with the fact that $\mathcal{C}^*$ is a largest independent set of $\widehat{G}$.
Thus our assumption $\ell \le n/2-1$ does not hold, which yields $\ell \ge n/2$.
For the case when $n$ is odd, we can process essentially the same analysis as above and get $\ell \ge (n-1)/2$. 
\end{proof}

\begin{lemma}\label{lem:Upper-n}
$U(n) \le  \lceil n/2 \rceil $
\end{lemma}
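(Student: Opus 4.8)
The plan is to prove the symmetric counterpart of Lemma~\ref{lem:Lowe-n}, namely that the maximum set-size $U(n)$ in a largest independent set $\mathcal{C}^*$ of $\widehat{G}$ satisfies $U(n) \le \lceil n/2 \rceil$. The natural approach is to mirror the contradiction argument of the previous lemma, but pushing \emph{down} instead of up: if $U(n)$ were too large, I would replace every maximum-size set by its $(|S|-1)$-element subsets and show this strictly increases the collection's size while preserving independence, contradicting maximality of $\mathcal{C}^*$.

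Concretely, assume $n$ is even and suppose for contradiction that $U(n)=u \ge n/2+1$. Let $\mathcal{C}^*_u = \{S \in \mathcal{C}^* \mid |S|=u\}$ be the top layer. I would form a new collection $\mathcal{C}_{new}=\mathcal{C}_1 \cup \mathcal{C}_2$ where $\mathcal{C}_1 = \bigcup_{S \in \mathcal{C}^*_u}\bigcup_{i \in S}\{S \setminus \{i\}\}$ consists of all $(u-1)$-element subsets obtained by deleting one vertex, and $\mathcal{C}_2 = \mathcal{C}^* \setminus \mathcal{C}^*_u$. The verification that $\mathcal{C}_{new}$ is independent in $\widehat{G}$ runs through the same three cases as before: within $\mathcal{C}_1$ all sets share the common size $u-1$ so no inclusion holds; within $\mathcal{C}_2$ independence is inherited from $\mathcal{C}^*$; and for the mixed case a set $S_1 \in \mathcal{C}_1$ has the \emph{largest} possible size among members with size $\le u-1$ now that the top layer is removed, so the only possible inclusion would be $S_2 \subset S_1$ for some $S_2 \in \mathcal{C}_2$, but writing $S_1 = S_1' \setminus \{i\}$ with $S_1' \in \mathcal{C}^*_u$ would then give $S_2 \subset S_1'$, contradicting that $\mathcal{C}^*$ was independent.

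For the size comparison, since $\mathcal{C}_1$ and $\mathcal{C}_2$ are disjoint, the inequality $|\mathcal{C}_{new}| > |\mathcal{C}^*|$ reduces to $|\mathcal{C}_1| > |\mathcal{C}^*_u|$. I would count edges in the bipartite induced subgraph $\widehat{G}(\mathcal{C}^*_u, \mathcal{C}_1)$: each $S \in \mathcal{C}^*_u$ has degree exactly $u$ (one edge to each of its $u$ subsets of size $u-1$), while each $T \in \mathcal{C}_1$ has degree at most $n-(u-1)=n-u+1$ (the number of ways to extend $T$ by a vertex). Double counting the edge set $\widetilde{E}$ gives $|\mathcal{C}^*_u|\,u \le |\mathcal{C}_1|\,(n-u+1)$, so $|\mathcal{C}^*_u|/|\mathcal{C}_1| \le (n-u+1)/u$. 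Under the assumption $u \ge n/2+1$ this ratio is at most $(n/2)/(n/2+1) < 1$, yielding $|\mathcal{C}^*_u| < |\mathcal{C}_1|$ and the desired contradiction; thus $u \le n/2 = \lceil n/2 \rceil$. The odd case is handled by the same computation, giving $u \le (n+1)/2 = \lceil n/2\rceil$. The main obstacle, as in the lower-bound lemma, is getting the degree bound in $\mathcal{C}_1$ right and confirming the inclusion analysis in the mixed case is airtight once the maximal layer has been stripped away; everything else is a direct dualization of the prior argument.
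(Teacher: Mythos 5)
Your proof is correct, but it takes a genuinely different route from the paper's. The paper proves this lemma by a duality trick rather than by repeating the exchange argument: it complements every set in $\mathcal{C}^*$, forming $\mathcal{C}^*_{dual}=\bigcup_{S\in\mathcal{C}^*}\{V\setminus S\}$, observes that complementation reverses inclusion (so $\mathcal{C}^*_{dual}$ is again an independent set of $\widehat{G}$, of the same cardinality, hence also a largest one), and then simply invokes Lemma~\ref{lem:Lowe-n} on $\mathcal{C}^*_{dual}$: its sets have size at least $\lfloor n/2\rfloor$, so the sets of $\mathcal{C}^*$ have size at most $n-\lfloor n/2\rfloor=\lceil n/2\rceil$. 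You instead re-run the compression argument of Lemma~\ref{lem:Lowe-n} in mirror image, replacing the top layer $\mathcal{C}^*_u$ by all of its $(u-1)$-element subsets and double counting the bipartite graph between the layers; your details check out. The three-case independence analysis is sound (in the mixed case $|S_2|\le u-1=|S_1|$ forces the only candidate inclusion to be $S_2\subset S_1\subset S_1'$ with $S_1',S_2\in\mathcal{C}^*$, contradicting independence of $\mathcal{C}^*$, and the disjointness of $\mathcal{C}_1$ and $\mathcal{C}_2$ follows from independence of $\mathcal{C}^*$ as well), and the degree counts are right: exactly $u$ on the top side, since every one of the $u$ subsets of size $u-1$ of each $S\in\mathcal{C}^*_u$ is placed in $\mathcal{C}_1$ by construction, and at most $n-u+1$ on the bottom side, giving $|\mathcal{C}^*_u|/|\mathcal{C}_1|\le (n-u+1)/u<1$ whenever $u\ge\lceil n/2\rceil+1$, in both parities. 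As for what each approach buys: the paper's proof is essentially three lines and reuses Lemma~\ref{lem:Lowe-n} wholesale, exploiting the symmetry of $\widehat{G}$ under set complementation; yours is self-contained and makes explicit that stripping the oversized layer strictly enlarges the independent set, at the cost of duplicating the counting machinery. Given that Lemma~\ref{lem:Lowe-n} is already established, the complementation argument is the more economical choice, but your version would stand on its own even if the two lemmas were proved independently.
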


\begin{proof}
Let us assume $n$ is an even number.
Based on $\mathcal{C}^*$, we
construct a dual collection $\mathcal{C}^*_{dual}$ as follows:
Initialize $\mathcal{C}^*_{dual}$ as an empty collection.
For each $S\in \mathcal{C}^*$, we add $V \setminus S$ into $\mathcal{C}^*_{dual}$.
Mathematically, we have: $\mathcal{C}^*_{dual}=\bigcup_{S\in \mathcal{C}^*} \{V \setminus S\}$.
First we show $\mathcal{C}^*_{dual}$ is an independent set of $\widehat{G}$.
Arbitrarily choose two distinct sets,
say $V \setminus S_1 \in \mathcal{C}^*_{dual}, V \setminus S_2 \in \mathcal{C}^*_{dual}$, where
$S_1 \in \mathcal{C}^*, S_2 \in \mathcal{C}^*, S_1 \ne S_2$. Note that
$$V \setminus S_1 \subset  V \setminus S_2 \Leftrightarrow S_1 \supset S_2, V \setminus S_2 \subset  V \setminus S_1 \Leftrightarrow S_2 \supset S_1 $$

Thus we have that no inclusion relation could exist between $V \setminus S_1$ and $V \setminus S_2$ since
no inclusion relation exists between $S_1$ and $S_2$ resulting from the fact that
$\mathcal{C}^*$ is an independent set of $\widehat{G}$. So we get
$\mathcal{C}^*_{dual}$ is an independent set as well.

We can verify that $|\mathcal{C}^*_{dual}|=|\mathcal{C}^*|$. Therefore we can conclude
$\mathcal{C}^*_{dual}$ is a largest independent set of $\widehat{G}$.
By Lemma~\ref{lem:Lowe-n}, we get to know the minimum size of sets in
$\mathcal{C}^*_{dual}$ should be at least $n/2$, which yields the maximum size of
of sets in $\mathcal{C}^*$ should be at most $n/2$.
For the case when $n$ is odd, we can analyze essentially the same as above. 
\end{proof}

\section{Enumeration Algorithm}
\label{sec:algorithms}

In this section, we present {\bf MULE} (Maximal Uncertain cLique Enumeration), an algorithm for enumerating all $\alpha$-maximal cliques in an uncertain graph $\ugraph$, followed by a proof of correctness and an analysis of the runtime. We assume that $\ugraph$ has no edges $e$ such that $p(e) < \alpha$.  If there are any such edges, they can be pruned away without losing any $\alpha$-maximal cliques, using Observation~\ref{obs:Elim}.  Let the vertex identifiers in $\ugraph$ be $1,2,\ldots,n$. For clique $C$, let $\max(C)$ denote the largest vertex in $C$.  For ease of notation, let $\max(\emptyset) = 0$, and let $\cliqueprob(\emptyset,\ugraph) = 1$. 


\paragraph{Intuition}
We first describe a basic approach to enumeration using depth-first-search (DFS) with backtracking. The algorithm starts with a set of vertices $C$ (initialized to an empty set) that is an $\alpha$-clique and incrementally adds vertices to $C$, while retaining the property of $C$ being an $\alpha$-clique, until we can add no more vertices to $C$. At this point, we have an $\alpha$-maximal clique. Upon finding a clique that is $\alpha$-maximal, the algorithm backtracks to explore other possible vertices that can be used to extend $C$, until all possible search paths have been explored. To avoid exploring the same set $C$ more than once, we add vertices in increasing order of the vertex id. For instance, if $C$ was currently the vertex set $\{1,3,4\}$, we do not consider adding vertex $2$ to $C$, since the resulting clique $\{1,2,3,4\}$ will also be reached by the search path by adding vertices $1,2,3,4$ in that order. 

MULE improves over the above basic DFS approach in the following ways. First, given a current $\alpha$-clique $C$, the set of vertices that can be added to extend $C$ includes only those vertices that are already connected to every vertex within $C$. Instead of considering every vertex that is greater than $\max(C)$, it is more efficient to track these vertices as the recursive algorithm progresses -- this will save the effort of needing to check if a new vertex $v$ can actually be used to extend $C$. This leads us to incrementally track vertices that can still be used to extend $C$. 

Second, note that not all vertices that extend $C$ into a clique preserve the property of $C$ being an $\alpha$-clique. In particular, adding a new vertex $v$ to $C$ decreases the clique probability of $C$ by a factor equal to the product of the edge probabilities between $v$ and every vertex in $C$. So, in considering vertex $v$ for addition to $C$, we need to compute the factor by which the clique probability will fall. This computation can itself take $\Theta(n)$ time since the size of $C$ can be $\Theta(n)$, and there can be $\Theta(n)$ edges to consider in adding $v$. A key insight is to reduce this time to $O(1)$ by incrementally maintaining this factor for each vertex $v$ still under consideration. The recursive subproblem contains, in addition to current clique $C$, a set $I$ consisting of pairs $(u,r)$ such that $u > \max(C)$, $u$ can extend $C$ into an $\alpha$-clique, and adding $u$ will multiply the clique probability of $C$ by a factor of $r$. This set $I$ is incrementally maintained and supplied to further recursive calls.

Finally, there is the cost of checking maximality. Suppose that at a juncture in the algorithm we found that $I$ was empty, i.e. there are no more vertices greater than $\max(C)$ that can extend $C$ into an $\alpha$-clique. This does not yet mean that $C$ is an $\alpha$-maximal clique, since it is possible there are vertices less than $\max(C)$, but not in $C$, which can extend $C$ to an $\alpha$-maximal clique (note that such an $\alpha$-maximal clique will be found through a different search path). This means that we have to run another check to see if $C$ is an $\alpha$-maximal clique. Note that even checking if a set of vertices $C$ is an $\alpha$-maximal clique can be a $\Theta(n^2)$ operation, since there can be as many as $\Theta(n)$ vertices to be potentially added to $C$, and $\Theta(n^2)$ edge interactions to be considered. We reduce the time for searching such vertices by maintaining the set $X$ of vertices that can extend $C$, but will be explored in a different search path. By incrementally maintaining probabilities with vertices in $I$ and $X$, we can reduce the time for checking maximality of $C$ to $\Theta(n)$.

MULE incorporates the above ideas and is described in Algorithm~\ref{algo:main}.
\begin{algorithm}[ht]

\caption{MULE($\ugraph,\alpha$)}
\KwIn{$\ugraph \mbox{ is the input uncertain graph}$}
\KwIn{$\alpha, 0 < \alpha < 1$ is the user provided probability threshold}
\label{algo:main}

$\hat{I} \leftarrow \emptyset$ \;
\ForAll{$u \in V$}
{
    $\hat{I} \leftarrow \hat{I} \cup \{  (u,1) \}$
}

Enum-Uncertain-MC($\emptyset$, 1 ,$\hat{I}$, $\emptyset$) \;

\end{algorithm}
\begin{algorithm}[ht]
\caption{Enum-Uncertain-MC($C,q,I,X$)}
\KwIn{We assume $\ugraph$ and $\alpha$ are available as immutable global variables} 
\KwIn{$C \mbox{ is the current Uncertain Clique being processed}$}
\KwIn{$q = \cliqueprob(C,\ugraph)$, maintained incrementally}
\KwIn{$I \mbox{ is a set of all tuples} \left(u,r\right)$, such that $\forall (u,r) \in I$, $u > max(C)$, and $\cliqueprob(C \cup \{ u\},\ugraph) = q \cdot r \ge \alpha$, i.e. $C \cup \{ u \}$ is an $\alpha$-clique in $\ugraph$}
\KwIn{$X \mbox{ is a set of all tuples} \left(v,s\right)$, such that $\forall (v,s) \in X$, $v \not \in C$, $v < max(C)$, and $\cliqueprob(C \cup \{v\},\ugraph) = q \cdot s \ge \alpha$ , i.e. $C \cup \{ v \}$ is an $\alpha$-clique in $\ugraph$}
\label{algo:dfs}

        \If{$I = \emptyset$ and $X = \emptyset$} 
        {
            \label{algoline:check}
            Output $C$ as $\alpha$-maximal clique \label{algoline:out}  \;
            \Return
        }

	\ForAll{$ (u,r) \in I$ considered in increasing order of $u$}
	{
                \label{algoline:loop}
                $C' \leftarrow C \cup \{ u \}$ \tcp{Note $m = max(C') = u$} \label{algoline:genc} 
                $q' \leftarrow q \cdot r$  \tcp{$\cliqueprob(C \cup \{ v \},\ugraph)$}  \label{algoline:prob}
                $I'  \leftarrow GenerateI(C',q',I)$ \; \label{algoline:genI}
                $X'  \leftarrow GenerateX(C',q',X)$ \; \label{algoline:genX}
                Enum-Uncertain-MC($C',q',I',X')$ \; \label{algoline:recursion}
                $X \leftarrow X \cup \{ (u,r) \}$ \label{algoline:addX}
        }
\end{algorithm}

\begin{algorithm}[ht]
\caption{GenerateI($C',q',I$)}
\label{algo:generateI}
\KwIn{We assume $\ugraph$ and $\alpha$ are available as immutable global variables} 

  $m \leftarrow max(C')$, $I' \leftarrow \emptyset$, $S \leftarrow \emptyset$ \;

  \ForAll{$(u,r) \in I$}
  {
    $S \leftarrow S \cup \{ u \}$   
  }

  $S \leftarrow S \cap \{ \Gamma(m) \}$

  \ForAll{$(u,r) \in I$}
  {
    \If{$u > m$ and $u \in S$}
    {
        $\cliqueprob(C' \cup\{  u \},\ugraph) \leftarrow q' \cdot r \cdot p(\{u,m\})$

        \If{$(\cliqueprob(C' \cup\{  u \},\ugraph)) \ge \alpha$}
        {
          $u' \leftarrow u$ \;
          $r' \leftarrow r \cdot p(\{u,m\})$ \;
          $I' \leftarrow I' \cup  \{ (u',r') \}$ 
        }
    }
  }

  \Return I'

\end{algorithm}


\begin{algorithm}[ht]
\caption{GenerateX($C',q',X$)}
\label{algo:generateX}
\KwIn{We assume $\ugraph$ and $\alpha$ are available as immutable global variables} 

  $m \leftarrow max(C')$, $X' \leftarrow \emptyset$, $S \leftarrow \emptyset$ \;

  \ForAll{$(v,s) \in I$}
  {
    $S \leftarrow S \cup \{ v \}$   
  }

  $S \leftarrow S \cap \{ \Gamma(m) \}$

  \ForAll{$(v,s) \in X$}
  {
    \If{$v \in S$}
    {
      $\cliqueprob(C' \cup\{  v \},\ugraph) \leftarrow q' \cdot s \cdot p(\{v,m\})$

      \If{$(\cliqueprob(C' \cup\{  v \},\ugraph) \ge \alpha$}
      {
        $v' \leftarrow v$ \;
        $s' \leftarrow s \cdot p(\{v,m\})$ \;
        $X' \leftarrow X' \cup  \{ (v',s') \}$
      }
    }
  }

  \Return X'

\end{algorithm}

\subsection{Proof of Correctness}

In this section we prove the correctness of MULE.

\begin{theorem}
\label{thm:correctness}
MULE (Algorithm~\ref{algo:main}) enumerates all $\alpha$-maximal cliques from an input uncertain graph $\ugraph$.
\end{theorem}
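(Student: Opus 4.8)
The plan is to prove the theorem by establishing a loop invariant for the recursive procedure Enum-Uncertain-MC (Algorithm~\ref{algo:dfs}) and then deducing soundness and completeness from it. Concretely, I would show that at every invocation Enum-Uncertain-MC$(C,q,I,X)$ the following hold: (i)~$C$ is an $\alpha$-clique and $q=\cliqueprob(C,\ugraph)$; (ii)~$(u,r)\in I$ if and only if $u>\max(C)$, the set $C\cup\{u\}$ is an $\alpha$-clique, and $\cliqueprob(C\cup\{u\},\ugraph)=q\cdot r$; and (iii)~$(v,s)\in X$ if and only if $v<\max(C)$, $v\notin C$, the set $C\cup\{v\}$ is an $\alpha$-clique, and $\cliqueprob(C\cup\{v\},\ugraph)=q\cdot s$. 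In words, $I$ and $X$ together track \emph{all} single-vertex extensions of $C$ into an $\alpha$-clique, split according to whether the extending vertex lies above or below $\max(C)$, each tagged with the exact multiplier by which it scales the clique probability.

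I would prove the invariant by induction on the depth of recursion. The base case is the top-level call, where $C=\emptyset$, $\max(\emptyset)=0$, $q=1$, $X=\emptyset$, and $I=\{(u,1):u\in V\}$; all three parts hold trivially since every singleton is an $\alpha$-clique with multiplier $1$. For the inductive step, suppose the invariant holds at $(C,q,I,X)$ and we process $(u,r)\in I$, recursing on $C'=C\cup\{u\}$ with $\max(C')=u$ and $q'=q\cdot r=\cliqueprob(C',\ugraph)$. The correctness of the multipliers produced by GenerateI and GenerateX follows from Observation~\ref{obs:Prod}: for any candidate $w$ one has $\cliqueprob(C'\cup\{w\},\ugraph)=q'\cdot(\text{old multiplier of }w)\cdot p(\{w,u\})$, which is exactly what the procedures compute. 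To see that $I'$ and $X'$ contain \emph{every} valid extension (not merely those that survive filtering), I would invoke Observation~\ref{obs:Sub}: if $C'\cup\{w\}$ is an $\alpha$-clique then so is $C\cup\{w\}$, hence $w$ already appeared in the parent's $I$ (if $w>\max(C)$) or $X$ (if $w<\max(C)$), so it is available for GenerateI or GenerateX to process.

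From the invariant, soundness is immediate: the procedure outputs $C$ only when $I=X=\emptyset$, and by parts~(ii)--(iii) this means no vertex $v\notin C$ can extend $C$ into an $\alpha$-clique, which is precisely condition~(2) of Definition~\ref{def:MaxCliqueUncertainGraph2}, while part~(i) supplies condition~(1). For completeness, let $M=\{m_1<\dots<m_k\}$ be an $\alpha$-maximal clique. I would argue by induction on $i$ that the search reaches the call with $C=\{m_1,\dots,m_i\}$: at each such call $\max(C)=m_i<m_{i+1}$ and $C\cup\{m_{i+1}\}\subseteq M$ is an $\alpha$-clique (Observation~\ref{obs:Sub}), so by part~(ii) the pair $(m_{i+1},\cdot)$ lies in $I$ and is therefore processed by the loop, driving the recursion to $C=\{m_1,\dots,m_{i+1}\}$. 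When the search reaches $C=M$, $\alpha$-maximality forces $I=\emptyset$ (no extension above $m_k$) and $X=\emptyset$ (no extension below $m_k$), so $M$ is output. Processing vertices in increasing id order guarantees each maximal clique is reached along a unique path, so no clique is output more than once.

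The main obstacle I anticipate is part~(iii), the invariant for $X$, because $X$ is not passed down unchanged but is accumulated across iterations of the loop (the line $X\leftarrow X\cup\{(u,r)\}$ after each recursive call). The delicate case is an extending vertex $v$ with $\max(C)<v<u$: such a $v$ lives in the parent's $I$, and I must argue that because $I$ is processed in increasing order of id, $v$ was handled \emph{before} $u$ and hence already inserted into $X$ by the time GenerateX is invoked for $u$. Verifying that this accumulation, combined with the filtering against $\Gamma(\max(C'))$, yields exactly the set described in part~(iii) — neither spurious nor missing vertices — is the heart of the argument; the probability bookkeeping via Observations~\ref{obs:Prod} and~\ref{obs:Sub} is then routine.
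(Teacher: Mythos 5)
Your proposal is correct and follows essentially the same route as the paper: your invariants (i)--(iii) are precisely the paper's Lemmas~\ref{obs:I} and~\ref{obs:X} (with Observation~\ref{obs:cliqueprob}), proved there too by induction on the call sequence, and your soundness and completeness arguments correspond to Lemmas~\ref{lemma:correctness1} and~\ref{lemma:correctness2}. Even the delicate point you flag --- a vertex $v$ with $\max(C) < v < u$ migrating from $I$ into $X$ via the accumulation step in line~\ref{algoline:addX}, justified by the increasing-order processing of the loop --- is exactly how the paper's proof of Lemma~\ref{obs:X} handles the $X$ invariant.
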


\begin{proof}
To prove the theorem we need to show the following. First, if $C$ is a clique emitted by Algorithm~\ref{algo:main}, then $C$ must be an $\alpha$-maximal clique. Next, if $C$ is an $\alpha$-maximal clique, then it will be emitted by Algorithm~\ref{algo:main}. We prove them in Lemmas~\ref{lemma:correctness1}~and~\ref{lemma:correctness2} respectively. 
\end{proof}

Before proving Lemmas~\ref{lemma:correctness1}~and~\ref{lemma:correctness2}, we prove some properties of Algorithm~\ref{algo:dfs}.

\begin{lemma}
\label{obs:I}
When Algorithm~\ref{algo:dfs} is called with $C'$ in line~\ref{algoline:recursion}, $I'$ is a set of all tuples $(u'r')$, where $u' \in V$ and $0 < r' \le 1$, such that, $\forall (u',r') \in I'$ , $u' > max(C')$,  and $\cliqueprob(C' \cup \{ u' \},\ugraph) = q' \cdot r' \ge \alpha$, i.e. $C' \cup \{ u' \}$ is an $\alpha$-clique in $\ugraph$.
\end{lemma}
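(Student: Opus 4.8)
The plan is to establish the stated property by induction on $|C|$ (equivalently, on the depth of recursion), strengthening the assertion slightly so that the induction carries. Concretely, I would prove that for \emph{every} invocation Enum-Uncertain-MC($C,q,I,X$) generated by Algorithm~\ref{algo:main}, both $q=\cliqueprob(C,\ugraph)$ and
$$I = \{(u,r) : u \in V,\ u > \max(C),\ \cliqueprob(C\cup\{u\},\ugraph)=q\cdot r \ge \alpha\}.$$
The lemma is exactly this invariant read off the recursive call at line~\ref{algoline:recursion}, so it suffices to prove the strengthened statement for all calls. The base case is the top-level call from Algorithm~\ref{algo:main}, where $C=\emptyset$, $q=1$, and $I=\hat I=\{(u,1):u\in V\}$: here $\max(\emptyset)=0$, every $u\in V$ satisfies $u>0$, and $\cliqueprob(\{u\},\ugraph)=1=q\cdot 1\ge\alpha$, so the invariant holds.

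For the inductive step I would take a call satisfying the invariant and examine the recursive call it spawns with $C'=C\cup\{u\}$, $q'=q\cdot r$, and $I'=GenerateI(C',q',I)$ for some $(u,r)\in I$. Since $(u,r)\in I$, the hypothesis gives $\cliqueprob(C',\ugraph)=q\cdot r=q'$, settling the $q'$ clause, and $m:=\max(C')=u$ because $u>\max(C)$. The engine of the argument is the product decomposition of Observation~\ref{obs:Prod}: for any $w$ adjacent to all of $C'$,
$$\cliqueprob(C'\cup\{w\},\ugraph)=\cliqueprob(C',\ugraph)\cdot\prod_{x\in C'}p(\{w,x\}),$$
and since $C'=C\cup\{m\}$ the factor telescopes as $\prod_{x\in C}p(\{w,x\})\cdot p(\{w,m\})$. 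By the inductive hypothesis, if $(w,r'')\in I$ then $\prod_{x\in C}p(\{w,x\})=r''$ (using $q=\cliqueprob(C,\ugraph)$), so the updated multiplier $r''\cdot p(\{w,m\})$ computed by GenerateI is precisely $\prod_{x\in C'}p(\{w,x\})$, whence $\cliqueprob(C'\cup\{w\},\ugraph)=q'\cdot\bigl(r''\cdot p(\{w,m\})\bigr)$. Each such multiplier lies in $(0,1]$, being a product of edge probabilities in $(0,1]$.

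With this identity, both inclusions defining $I'$ follow. For soundness, every tuple GenerateI emits arises from some $(w,r'')\in I$ with $w>m$ and $w\in\Gamma(m)$ passing the test $q'\cdot r''\cdot p(\{w,m\})\ge\alpha$; by the identity this is exactly $w>\max(C')$ together with $\cliqueprob(C'\cup\{w\},\ugraph)=q'\cdot r'\ge\alpha$. For completeness, suppose $w>\max(C')$ and $\cliqueprob(C'\cup\{w\},\ugraph)\ge\alpha$. Positivity of this probability forces $C'\cup\{w\}$ to be a clique in $(V,E)$, so $w\in\Gamma(m)$ and $w$ is adjacent to all of $C$; also $w>\max(C')>\max(C)$, and Observation~\ref{obs:Sub} gives $\cliqueprob(C\cup\{w\},\ugraph)\ge\cliqueprob(C'\cup\{w\},\ugraph)\ge\alpha$. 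Thus $w$ meets the conditions for membership in $I$ relative to $C$, so the inductive hypothesis supplies $(w,r'')\in I$ with the right $r''$, and $w$ then clears both guards in GenerateI and is output.

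I expect the main obstacle to be the bookkeeping around the adjacency conditions needed to invoke Observation~\ref{obs:Prod} legitimately: the product formula for $\cliqueprob$ holds only when the vertex set is genuinely a clique in the underlying graph $(V,E)$, so I must argue, via positivity of the probability together with the guard forcing the candidate into $S\subseteq\Gamma(m)$, that $w$ is adjacent to every vertex of $C'$ before equating $\cliqueprob(C'\cup\{w\},\ugraph)$ with the telescoped product. The second delicate point is the completeness direction's use of Observation~\ref{obs:Sub} to ``pull back'' membership from $C'$ to $C$, which is precisely what lets the inductive hypothesis hand over the needed $(w,r'')\in I$.
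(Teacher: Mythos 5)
Your proposal is correct and takes essentially the same approach as the paper: an induction along the recursion (the paper inducts over the unique sequence of calls reaching $C'$, you over all invocations via a strengthened invariant on $q$ and $I$), with the inductive step reduced to the incremental multiplier update performed by GenerateI. Your write-up is in fact tighter than the paper's --- you make the soundness direction, the identity $r' = \prod_{x \in C'} p(\{w,x\})$, and the ``positivity of $\cliqueprob$ forces adjacency'' bookkeeping explicit, whereas the paper argues only the completeness direction in detail --- but the underlying argument is the same.
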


\begin{proof}
Let $u' \in V$ be a vertex such that
(1)~$u' > \max(C')$, and
(2)~$C' \cup \{ u' \}$ is an $\alpha$-clique in $\ugraph$.
We need to show that $(u',r') \in I'$ such that
$\cliqueprob(C' \cup \{ u' \},\ugraph) = q' \cdot r'$.

Let $C'$ be a clique being called by Enum-Uncertain-MC with $I'$. 
Note that each call of the method adds one vertex $u \in I$ to the current clique $C$ 
such $u > \max(C)$. Since the vertices are added in the lexicographical ordering,
there is an unique sequence of calls to the method Enum-Uncertain-MC
such that we reach a point in execution of Algorithm~\ref{algo:dfs} where
Enum-Uncertain-MC is called with $C'$. We call this sequence of calls as
Call-$0$, Call-$1$, $\ldots$, Call-$\left | C' \right |$. 
Also, let $C_i$ be the clique used by method Enum-Uncertain-MC during Call-$i$.

We prove by induction. First consider the base case. For that
consider the first call made to Algorithm~\ref{algo:dfs}, i.e. Call-$0$.
We know that $C$ is initialized as $\emptyset$. 
During the first call made, all vertices in $V$ satisfy conditions (1) and (2). 
This is because, first $max(\emptyset) = 0$. 
Second any single vertex can be considered as a clique with probability $1$.
$\hat{I}$ is initialized such that all $r$ in $\hat{I}$ are $1 \ge \alpha$.
Thus for all u such that $(u,r) \in \hat{I}$, $u > \max(C)$.
This proves the base case.

For the inductive step, consider a recursive call to the method 
Call-$i$ which calls Call-$(i+1)$.
For every case expect initialization, $I'$ is generated from $I$ by 
line~\ref{algoline:genI} of Algorithm~\ref{algo:dfs} which in turn
calls Algorithm~\ref{algo:generateI}. In Algorithm~\ref{algo:generateI},
only vertices in $I$ that are greater than $C'$ are added to $I'$.
Thus all vertices in $I$ that satisfy (1) are added to $I'$.
Next every vertex in $I$ is connected to $C$. We need to show that all vertices
in $I'$ are connected to $C'$. 
In line 4 of Algorithm~\ref{algo:generateI}, we prune out any vertex in $I$
that is not connected to $m = \max(C')$.
Assume that $u'$ extends $C$ such that
$\cliqueprob(C \cup \{ u' \},\ugraph) = r$. Now let $c = \{ C' \setminus C$ \}.
Note that $c$ is a single vertex. Also, assume $u' > c$. 
From line 4, we know that $q' \cdot r' \ge \alpha$
Also from line 6 of Algorithm~\ref{algo:generateI}, $r' = r \cdot p(\{c,u'\})$.
Now $\cliqueprob(C' \cup \{ u' \},\ugraph) = q' \cdot r \cdot p(\{c,u'\}) = q' \cdot r'$,
Now in line 8 of Algorithm~\ref{algo:generateI} we add $u'$ to $I'$ only if $r' \ge \alpha$
thus proving the inductive step. 
\end{proof}

The following observation follows from Lemma~\ref{obs:I}.
\begin{observation}
\label{obs:cliqueprob}
The input $C$ to Algorithm~\ref{algo:dfs} is an $\alpha$-clique.
\end{observation}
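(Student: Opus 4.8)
The plan is to argue by induction on the recursion depth, equivalently on $|C|$, leaning on Lemma~\ref{obs:I} for the inductive step. The observation should fall out essentially immediately once I track how the input clique $C$ to any invocation of Algorithm~\ref{algo:dfs} is produced.

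First I would dispatch the base case. The very first invocation comes from Algorithm~\ref{algo:main}, namely Enum-Uncertain-MC($\emptyset, 1, \hat{I}, \emptyset$), so here $C = \emptyset$. By the convention fixed just before the algorithm, $\cliqueprob(\emptyset,\ugraph) = 1$, and since $\alpha \le 1$ we have $\cliqueprob(\emptyset,\ugraph) \ge \alpha$; hence the empty set is trivially an $\alpha$-clique.

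For the inductive step I would consider any non-initial invocation of Algorithm~\ref{algo:dfs}. Such a call is issued on line~\ref{algoline:recursion} of some parent invocation holding a current clique $C$ and a set $I$, with $C' = C \cup \{u\}$ for the tuple $(u,r) \in I$ currently processed in the loop, and with $q' = q \cdot r$ (lines~\ref{algoline:genc}--\ref{algoline:prob}). Lemma~\ref{obs:I} tells us that every tuple $(u,r)$ in the set $I$ supplied to the parent satisfies $\cliqueprob(C \cup \{u\},\ugraph) = q \cdot r \ge \alpha$. Substituting $C' = C \cup \{u\}$ and $q' = q \cdot r$ then yields $\cliqueprob(C',\ugraph) = q' \ge \alpha$, so the input clique to the recursive call is an $\alpha$-clique, closing the induction.

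I do not expect a substantive obstacle, since the real work is already carried by Lemma~\ref{obs:I}, which certifies that membership of $(u,r)$ in $I$ witnesses that $C \cup \{u\}$ is an $\alpha$-clique. The only points needing care are anchoring the induction correctly at the empty clique --- where I must explicitly invoke the convention $\cliqueprob(\emptyset,\ugraph) = 1$ together with $\alpha \le 1$ --- and matching the recursion bookkeeping so that the clique handed down on line~\ref{algoline:recursion} is recognized as exactly $C \cup \{u\}$ for a member of the parent's $I$, allowing the $\alpha$-clique guarantee to transfer verbatim.
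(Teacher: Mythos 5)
Your proof is correct and takes essentially the same route as the paper: the paper gives no explicit proof, merely remarking that the observation ``follows from Lemma~\ref{obs:I},'' and your induction is exactly that remark spelled out --- the root call holds $C=\emptyset$ with $\cliqueprob(\emptyset,\ugraph)=1\ge\alpha$, and every other call receives $C'=C\cup\{u\}$ for some $(u,r)$ in the parent's $I$, whose invariant (established by Lemma~\ref{obs:I}, or by the initialization of $\hat{I}$ at the root) certifies $\cliqueprob(C',\ugraph)=q\cdot r\ge\alpha$. Your added care in anchoring the base case and in transferring the $I$-invariant from parent to child makes the argument more complete than the paper's one-line justification, but it is the same underlying idea.
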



\begin{lemma}
\label{obs:X}
When Algorithm~\ref{algo:dfs} is called with $C'$ in line~\ref{algoline:recursion}, 
$X'$ is a set of all tuples $(v',s')$, where $v' \in V$ and $0 < s' \le 1$,
such that, $\forall (v',s') \in X'$, we have $v' \not \in C'$, $v' < \max(C')$, 
and $(\cliqueprob(C' \cup \{ v' \},\ugraph) = q' \cdot s') \ge \alpha$,
i.e. $C' \cup \{ v' \}$ is an $\alpha$-clique in $\ugraph$.
\end{lemma}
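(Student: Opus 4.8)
The plan is to mirror the inductive argument used for $I'$ in Lemma~\ref{obs:I}, but accounting for the fact that the set $X$ is fed from two distinct sources. As in that proof, I would fix the unique sequence of calls Call-$0$, Call-$1$, $\ldots$, Call-$|C'|$ to Enum-Uncertain-MC that reaches the invocation with clique $C'$, and induct on the depth of this sequence, writing $C_i$ and $X_i$ for the clique and the $X$-set used at Call-$i$. For the base case (Call-$0$, with $C=\emptyset$) the initial invocation is made with $X=\emptyset$, and since there is no vertex $v < \max(\emptyset)=0$, the claimed characterization holds vacuously.

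For the inductive step, suppose the claim holds at Call-$i$ for $C=C_i$, so the incoming $X$-set contains exactly the tuples $(v,s)$ with $v \notin C$, $v < \max(C)$, and $\cliqueprob(C \cup \{v\},\ugraph) = q \cdot s \ge \alpha$. The recursive call producing $C' = C \cup \{u\}$ (where $u = \max(C')$) is made with $X' = GenerateX(C',q',X)$, where $X$ is the \emph{current} value of the set at the moment vertex $u$ is processed in the for-loop of Algorithm~\ref{algo:dfs}. The crucial structural observation is that this current $X$ equals the inherited set $X_i$ augmented, through line~\ref{algoline:addX}, by every tuple $(u'',r'')$ drawn from $I$ with $\max(C) < u'' < u$ that was handled in an earlier iteration. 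I would then establish soundness and completeness of $X'$ separately.

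Soundness is the routine direction: GenerateX keeps only those $v$ lying in $\Gamma(m)$ with $m = \max(C') = u$, updates the stored factor to $s' = s \cdot p(\{v,u\})$, and retains $(v,s')$ only when $q' \cdot s' \ge \alpha$. A short computation via Observation~\ref{obs:Prod} shows $q' \cdot s' = \cliqueprob(C' \cup \{v\},\ugraph)$, so every retained tuple meets the stated probability bound; and since each such $v$ was either below $\max(C)$ or a strictly smaller, already-processed member of $I$, we also have $v \notin C'$ and $v < \max(C')$, with $0 < s' \le 1$ because $s'$ is a product of edge probabilities.

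Completeness is where the two-source structure matters, and is the step I expect to be the main obstacle. Take any $v'$ with $v' \notin C'$, $v' < u$, and $C' \cup \{v'\}$ an $\alpha$-clique. By Observation~\ref{obs:Sub}, $C \cup \{v'\}$ is then also an $\alpha$-clique, and $v' \notin C$. If $v' < \max(C)$, the induction hypothesis places the corresponding tuple in $X_i$; if $\max(C) < v' < u$, then Lemma~\ref{obs:I} applied to $C$ places $v'$ in the parent's $I$, whence it was processed before $u$ and inserted into $X$ by line~\ref{algoline:addX} (the case $v' = \max(C)$ cannot arise, as $\max(C) \in C'$). In either case $v'$ is present in the current $X$; it lies in $\Gamma(u)$ because $C' \cup \{v'\}$ is a clique, and it survives the probability test of GenerateX since $q' \cdot s \cdot p(\{v',u\}) = \cliqueprob(C' \cup \{v'\},\ugraph) \ge \alpha$. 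Hence the updated tuple appears in $X'$. Combining soundness and completeness closes the induction.
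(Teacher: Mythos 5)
Your proof is correct and takes essentially the same route as the paper's: identify the parent call with $C = C' \setminus \{m\}$, use Observation~\ref{obs:Sub} and Lemma~\ref{obs:I} to show that each qualifying vertex strictly between $\max(C)$ and $m$ is processed earlier in the loop and enters $X$ via line~\ref{algoline:addX}, and then verify it survives the membership and probability tests in GenerateX. If anything, your version is slightly tighter than the paper's, since your explicit induction also covers the inherited tuples with $v' < \max(C)$ and explicitly rules out $v' = \max(C)$, cases the paper's completeness argument leaves implicit.
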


\begin{proof}
Let $m = \max(C')$ and $C = C' \setminus \{ m\}$.
Since Algorithm~\ref{algo:dfs} was called with $C'$, 
it must have been called with $C$. This is because the working clique is always 
extended by adding vertices from $I$, and from Lemma~\ref{obs:I},
$I$ only contains vertices that are greater than the maximum vertex in $C$.
Let $X$ be the corresponding set of tuples used when the call was made to 
Enum-Uncertain-MC with $C$. Let $u > \max(C)$ be a vertex such that 
$\cliqueprob(C' \cup \{ u \},\ugraph) \ge \alpha$ and $u < m$.
Note that $u \not \in C'$, $u < \max(C')$, and 
$C' \cup \{ u \}$ is an $\alpha$-clique in $\ugraph$. 
This means $u$ satisfies all conditions for $u \in X'$.
We need to show that when Enum-Uncertain-MC is called with $C'$,
the generated $X'$ which is passed in Enum-Uncertain-MC contains $u$.

Firstly, note that since $C' \cup \{ u \}$ is $\alpha$-clique in $\ugraph$,
we have $\cliqueprob(C \cup \{ u \},\ugraph) \ge \alpha$ (from Observation~\ref{obs:Sub}).
Since $u > \max(C)$ and $\cliqueprob(C \cup \{ u \},\ugraph) \ge \alpha$,
from Lemma~\ref{obs:I}, $u$ will be used in line~\ref{algoline:loop}
to call Enum-Uncertain-MC using $C \cup \{ u \}$.
Once this call is returned, $u$ is added to $X$ in line~\ref{algoline:addX}.
Note that since the loop at line~\ref{algoline:loop} add vertices in 
lexicographical order, $m$ will be added to $C$ after $u$.
Thus $u$ will be in $X$, when $m$ is used to extend $C$.
Next we show that if $u \in X$, after execution of line~\ref{algoline:genX},
$u \in X'$. We prove this as follows. Note that Algorithm~\ref{algo:generateX} 
is used to generate $X'$ from $X$. Note that $X'$ is generated by 
Algorithm~\ref{algo:generateX} by selectively adding vertices from $X$.
A vertex is added to $X'$ from $X$, only if $C' \cup \{ u \}$ is $\alpha$-clique in $\ugraph$. 
From our initial assumptions, we know that $u$ satisfies this condition and is 
hence added to $X'$ and passed on to Enum-Uncertain-MC when it is called with $C'$.

Now let us consider $v$, such that $v$ does not satisfy all the conditions
for $v \in X'$. We need to show that $v \not \in X'$. There are two cases.
First, when $v \not \in X$. This case is trivial as $X'$ is constructed from
$X$ and hence if $v \not \in X$, $v \not \in X'$.
For the second case, when $v \in X$, we need to show that $v$
will not be added to $X'$ in line~\ref{algoline:genX} of Algorithm~\ref{algo:dfs}. 
Note that since $v \in X$, we know $v \not \in C'$ and $v < \max(C')$.
Thus, it must be that $C \cup \{ m, v \}$ is not an $\alpha$-clique in $\ugraph$.
Algorithm~\ref{algo:generateX} will add $v$ to $X'$ only if
$C \cup \{ m, v \}$ is $\alpha$-clique in $\ugraph$. 
But from our previous discussion, we know that this condition doesn't hold.
Hence, $v$ will not be added to $X'$.
Thus only vertices that satisfy all three conditions are in $X'$.  
\end{proof}

\begin{lemma}
\label{lemma:correctness1}
Let $C$ be a clique emitted by Algorithm~\ref{algo:dfs}. Then $C$ is an $\alpha$-maximal clique. 
\end{lemma}

\begin{proof}
Algorithm~\ref{algo:dfs} emits $C$ in Line~\ref{algoline:out}.
From Observation~\ref{obs:cliqueprob}, we know that $C$ is an $\alpha$-clique. We need to show that $C$ is $\alpha$-maximal. We use proof by contradiction. Suppose $C$ is non-maximal. This means that there exists a vertex $u \in V$, such that  $C \cup \{u\}$ is an $\alpha$-clique. We know that $I = \emptyset$ when $C$ is emitted. From Lemma~\ref{obs:I}, we know that there exists no vertex $u \in V$ such that $u > max(C)$ that can extend $C$. Again, we know that $X = \emptyset$ when $C$ is emitted. Thus from Lemma~\ref{obs:X}, we know that there exists no vertex $v \in V$ such that $v < \max(C)$ that can extend $C$. This is a contradiction and hence $C$ is an $\alpha$-maximal clique. 
\end{proof}

\begin{lemma}
\label{lemma:correctness2}
Let $C$ be an $\alpha$-maximal clique in $\ugraph$. Then $C$ is emitted by Algorithm~\ref{algo:dfs}.
\end{lemma}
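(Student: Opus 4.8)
The plan is to exhibit the unique search path along which MULE constructs $C$, and then to argue that the terminal recursive call on this path emits $C$. Write $C = \{v_1, v_2, \ldots, v_k\}$ with $v_1 < v_2 < \cdots < v_k$, and for $0 \le i \le k$ let $C_i = \{v_1, \ldots, v_i\}$ denote the $i$-th prefix (so $C_0 = \emptyset$ and $C_k = C$). Because the loop in line~\ref{algoline:loop} of Algorithm~\ref{algo:dfs} adds candidate vertices in increasing order of vertex id, there is at most one sequence of recursive calls that can reach $C$, namely the one that adds $v_1, v_2, \ldots, v_k$ in that order. I would show by induction on $i$ that Enum-Uncertain-MC is in fact invoked with working clique $C_i$ for every $i$.

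For the base case, the top-level call in MULE invokes Enum-Uncertain-MC with $C_0 = \emptyset$. For the inductive step, suppose the algorithm is called with $C_i$ for some $i < k$. Since $C$ is an $\alpha$-clique and $C_{i+1} \subseteq C$, Observation~\ref{obs:Sub} gives that $C_{i+1} = C_i \cup \{v_{i+1}\}$ is also an $\alpha$-clique; moreover $v_{i+1} > v_i = \max(C_i)$. By Lemma~\ref{obs:I}, the set $I$ supplied to the call on $C_i$ contains exactly those vertices $u > \max(C_i)$ with $C_i \cup \{u\}$ an $\alpha$-clique, so $(v_{i+1}, r)$ lies in $I$ for the appropriate factor $r$. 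Hence the loop at line~\ref{algoline:loop} processes $v_{i+1}$ and issues the recursive call on $C_{i+1}$ at line~\ref{algoline:recursion}, completing the induction. Taking $i = k$ shows that Enum-Uncertain-MC is called with $C$ itself.

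It remains to verify that this call emits $C$, i.e. that the guard $I = \emptyset$ and $X = \emptyset$ at line~\ref{algoline:check} holds. Since $C$ is $\alpha$-maximal, no vertex $w \notin C$ yields an $\alpha$-clique $C \cup \{w\}$. Applying Lemma~\ref{obs:I} to the call on $C$, the set $I$ consists precisely of the vertices $u > \max(C)$ extending $C$ into an $\alpha$-clique, of which there are none; thus $I = \emptyset$. Symmetrically, Lemma~\ref{obs:X} identifies $X$ with the vertices $v < \max(C)$, $v \notin C$, extending $C$ into an $\alpha$-clique, again none by maximality; thus $X = \emptyset$. Therefore the condition at line~\ref{algoline:check} is satisfied and $C$ is output at line~\ref{algoline:out}.

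The only delicate point is the inductive reaching argument: one must be careful that the two structural invariants, Lemmas~\ref{obs:I} and~\ref{obs:X}, genuinely apply to each intermediate call on $C_i$ and not merely to $C$ itself. This is exactly what guarantees that every required vertex $v_{i+1}$ appears in $I$ at the right moment so that the DFS does not skip it; once these invariants are in hand, the maximality of $C$ makes the final emission immediate.
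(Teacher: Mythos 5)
Your proof is correct and follows essentially the same route as the paper's: an induction along the lexicographically ordered prefixes $C_0 \subset C_1 \subset \cdots \subset C_k = C$, using Lemma~\ref{obs:I} to place the next vertex in $I$ at each step, and then invoking Lemmas~\ref{obs:I} and~\ref{obs:X} together with maximality to get $I = \emptyset$ and $X = \emptyset$ at the terminal call. Your write-up is if anything slightly cleaner, since you induct directly from the root call on $\emptyset$ rather than first stating the claim for a general starting clique $C$ as the paper does.
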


\begin{proof}
We first show that a call to method Enum-Uncertain-MC with $\alpha$-clique $C$ enumerates all $\alpha$-maximal cliques $C'$ in $\ugraph$, such that for all $c \in \{ C' \setminus C \}$, $c > \max(C)$.

Without loss of generality, consider a $\alpha$-maximal clique $C'$ in $\ugraph$ such that $\forall c \in \{C' \setminus C \}$, $c > \max(C)$. Note that $C'$ will be emitted as an $\alpha$-maximal clique by the method Enum-Uncertain-MC when called with $C$, if the following holds: (1) A call to method Enum-Uncertain-MC is made with $C'$, (2) When this call is made, $I' = \emptyset$, and $X' = \emptyset$. Since $C'$ is $\alpha$-maximal clique in $\ugraph$, the second point follows from Lemmas~\ref{obs:I}~and~\ref{obs:X}. Thus we need to show that a call to Enum-Uncertain-MC is made with $C'$.

We prove this by induction.
Let $\hat{C} = \{ C' \setminus C \}$.
Let $c_i$ represent the $i$th element in $\hat{C}$ in lexicographical order.
Also let $C_i = C \cup \{ c_{1}, c_{2}, \ldots, c_i \}$.
For the base case, we show that if a call to Enum-Uncertain-MC is made with $C$,
a call will be made with $C_1 = C \cup \{ c_1 \}$. This is because, 
line~\ref{algoline:loop} of the method loops over every vertex $u \in I$ thus
implying $u > \max(C)$ and $\cliqueprob(C \cup \{ u \},\ugraph) \ge \alpha$.
Since $C'$ is an $\alpha$-maximal clique, $c_1$ will satisfy both these
conditions and hence a call to Enum-Uncertain-MC is made with $C \cup \{ c_1 \}$.
Now for the inductive step we show that if a call is made with clique
$C_i$, then this call will in turn call the method with clique $C_{i+1}$.
Again, $c_{i+1}$ is greater than $\max(C_i)$ and 
$\cliqueprob(C_i \cup \{ c_{i+1} \},\ugraph) \ge \alpha$. 
Thus $c_{i+1} \in I$ when the call is made to Enum-Uncertain-MC with $C_i$.
Hence using the previous argument, in line~\ref{algoline:loop}, $c_{i+1}$ will be used as 
a vertex in the loop which would in turn make a call to Enum-Uncertain-MC with $C_{i+1}$. 

Now without any loss of generality, consider an $\alpha$-maximal clique in $\ugraph$.
We know that $C \supset \emptyset$. Thus the proof follows. 
\end{proof}

\subsection{Runtime Complexity}

\begin{theorem}
\label{thm:complexity}
The runtime of MULE (Algorithm~\ref{algo:main}) on an input graph of $n$ vertices is $O\left (n \cdot 2^n \right)$.
\end{theorem}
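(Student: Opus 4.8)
The plan is to bound the total runtime as (number of nodes in the recursion tree of Enum-Uncertain-MC) times (work per node), but with a careful amortized charging argument that avoids losing a spurious factor of $n$. First I would argue that each recursive call is made with a \emph{distinct} $\alpha$-clique $C$: by Observation~\ref{obs:cliqueprob} every working set $C$ is an $\alpha$-clique, and since the loop in line~\ref{algoline:loop} only appends vertices $u > \max(C)$, each $\alpha$-clique is reached along the unique path obtained by inserting its vertices in increasing order (this is precisely the uniqueness of the call sequence already used in the proof of Lemma~\ref{obs:I}). Hence the number of nodes in the recursion tree is at most the number of $\alpha$-cliques in $\ugraph$, which is trivially at most $2^n$, the number of subsets of $V$. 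This bound is essentially tight: if every $p(e)$ is close enough to $1$ that $\prod_{e \in E} p(e) \ge \alpha$, then by Observation~\ref{obs:Sub} every subset of $V$ is an $\alpha$-clique, so all $2^n$ nodes are explored.

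Next I would bound the work done at a single node. A leaf (the case $I=\emptyset$ and $X=\emptyset$) costs $O(n)$ to emit $C$. For an internal node, the loop in line~\ref{algoline:loop} runs $|I|$ times, and each iteration performs one call to GenerateI, one call to GenerateX, an $O(1)$ update of $X$, and a recursive call. The key subclaim is that GenerateI and GenerateX each run in $O(n)$ time: both build a candidate vertex set $S$ from their input list in $O(|I|)=O(n)$ (respectively $O(|X|)=O(n)$) time, intersect it with $\Gamma(m)$ in $O(n)$ time assuming $O(1)$ adjacency queries, and then make a single pass that computes each surviving tuple's probability factor incrementally in $O(1)$ per element (using the precomputed $r$ or $s$ multiplied by $p(\{u,m\})$). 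Thus each loop iteration costs $O(n)$.

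The main obstacle, and the reason a naive analysis is too weak, is that summing the per-iteration cost directly gives $O(|I|\cdot n)=O(n^2)$ per node and hence $O(n^2 \cdot 2^n)$ overall. To remove the extra factor of $n$, I would charge the $O(n)$ cost of each loop iteration not to the parent but to the child $\alpha$-clique $C' = C\cup\{u\}$ that it creates. Each iteration creates exactly one child, and distinct iterations create distinct children, so every node of the recursion tree is charged $O(n)$ exactly once, namely for the single parent iteration that produced it. Consequently the total cost of all GenerateI/GenerateX calls plus loop overhead is $O(n)$ times the number of tree edges, which is one less than the number of nodes.

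Finally I would combine the pieces: the recursion tree has at most $2^n$ nodes; each non-root node is charged $O(n)$ for its creation, each leaf is charged an additional $O(n)$ for emission, and the one-time initialization in Algorithm~\ref{algo:main} is $O(n)$. Summing these contributions yields a total runtime of $O(n \cdot 2^n)$, as claimed.
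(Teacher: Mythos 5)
Your proposal is correct and follows essentially the same route as the paper: the paper likewise bounds the number of calls by the $O(2^n)$ subsets of $V$, invokes the $O(n)$ bounds on GenerateI and GenerateX (its Lemmas~\ref{lemma:complexity1} and~\ref{lemma:complexity2}), and performs your ``charge each iteration to the child'' amortization in the equivalent form of summing an $O(n)$ cost over the edges of the search tree rather than over internal nodes. Your write-up is somewhat more explicit than the paper's (e.g., justifying node-distinctness via the unique increasing insertion order, and charging leaves $O(n)$ for emission where the paper counts them as $O(1)$), but these are presentational refinements of the identical argument.
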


\begin{proof}
MULE initializes variables and calls to Algorithm~\ref{algo:dfs}, hence we analyze the runtime of Algorithm~\ref{algo:dfs}. An execution of the recursive Algorithm~\ref{algo:dfs} can be viewed as a search tree as follows. Each call to Enum-Uncertain-MC is a node of this search tree. The first call to the method is the root node. A node in this search tree is either an internal node that makes one or more recursive calls, or a leaf node that does not make further recursive calls. To analyze the runtime of Algorithm~\ref{algo:dfs}, we consider the time spent at internal nodes as well as leaf nodes.

The runtime at each leaf node is $O(1)$. For a leaf node, the parameter $I = \emptyset$, and there are no further recursive calls. This implies that either $C$ is $\alpha$-maximal ($X = \emptyset$) and is emitted in line~\ref{algoline:out} or it is non-maximal ($X \neq \emptyset$) but cannot be extended by the loop in line~\ref{algoline:loop} as $I = \emptyset$. Checking the sizes of $I$ and $X$ takes constant time.

We next consider the time taken at each internal node. Instead of adding up the times at different internal nodes, we equivalently add up the cost of the different edges in the search tree. At each internal node, the cost of making a recursive call can be analyzed as follows. Line~\ref{algoline:genc} takes $O \left ( n \right )$ time as we add all vertices in $C$ to $C'$ and also $u$. Line~\ref{algoline:prob} takes constant time. Lines~\ref{algoline:genI}~and~\ref{algoline:genX} take $O \left ( n \right )$ time (Lemmas~\ref{lemma:complexity1}~and~\ref{lemma:complexity2} respectively). Note that lines~\ref{algoline:genc}~to~\ref{algoline:genX} can get executed only once in between the two calls. Thus total runtime for each edge of the search tree is $O\left ( n \right )$.

Note that the total number of calls made to the method method Enum-Uncertain-MC is no more than the possible number of unique subsets of $V$, which is  $O \left ( 2^n \right )$. We see that for internal nodes, time complexity is $O \left ( n \right )$ and for leaf nodes it is $O \left ( 1 \right )$. Hence the time complexity of Algorithm~\ref{algo:dfs} is $O \left ( n \cdot 2^n \right )$. 
\end{proof}

Thus now we need to prove that lines~\ref{algoline:genI}~and~\ref{algoline:genX}
take  $O \left ( n \right )$ time. This implies that time complexity of
Algorithms~\ref{algo:generateI}~and~\ref{algo:generateX} is $O \left ( n \right )$.
We prove the same in Lemmas~\ref{lemma:complexity1}~and~\ref{lemma:complexity2}
respectively.

\begin{lemma}
\label{lemma:complexity1}
The runtime of Algorithm~\ref{algo:generateI} is $O \left ( n \right )$.
\end{lemma}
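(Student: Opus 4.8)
The plan is to bound the size of the input set $I$ and then charge the cost of each line of Algorithm~\ref{algo:generateI}, under the standing assumption that neighbor and edge queries run in $O(1)$ time (e.g., an adjacency-matrix representation of $\ugraph$, or a per-vertex bitmap for each $\Gamma(\cdot)$, which is what makes the paper's per-edge probability lookups $p(\{u,m\})$ constant-time as well).

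First I would observe that $|I| \le n$. By Lemma~\ref{obs:I}, every tuple in $I$ has the form $(u,r)$ with $u \in V$, and distinct tuples carry distinct vertices $u$; since $|V| = n$, there are at most $n$ such tuples. The same bound applies to the output $I'$, so both the set we scan and the set we build have size $O(n)$. Computing $m \leftarrow \max(C')$ costs at most $O(|C'|) = O(n)$ (and is in fact the single newly added vertex $u$, so $O(1)$ if passed in).

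Next I would walk the algorithm line by line. Building the vertex set $S$ from $I$ in the first loop costs $O(|I|) = O(n)$. The step $S \leftarrow S \cap \Gamma(m)$ also costs $O(n)$: iterating over the at most $n$ elements of $S$ and testing membership $u \in \Gamma(m)$ in $O(1)$ each yields the intersection in linear time (alternatively, since $I$ is maintained in increasing order of vertex id, a single merge pass over $S$ and a sorted copy of $\Gamma(m)$ achieves the same). The second loop iterates over the at most $n$ tuples of $I$, and each iteration does only constant work: the comparison $u > m$, the membership test $u \in S$ (again $O(1)$ with $S$ stored as a bitmap), one multiplication to form $q' \cdot r \cdot p(\{u,m\})$, one comparison against $\alpha$, and an $O(1)$ insertion into $I'$. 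Hence this loop is $O(n)$ too, and summing the three contributions gives $O(n)$, the claimed bound.

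The only genuinely load-bearing assumption is that neighbor queries ($u \in \Gamma(m)$), edge-probability lookups ($p(\{u,m\})$), and set-membership queries ($u \in S$) each run in $O(1)$ time; I would state this explicitly, since without it both the intersection step and the per-iteration membership test could inflate to $O(n)$ apiece, making the loop $O(n^2)$ and degrading the bound. This is the point I expect to require the most care, but it is discharged cleanly by the adjacency-matrix (or equivalent bitmap) representation that the paper already implicitly relies on for its constant-time probability computations.
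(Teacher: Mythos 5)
Your proposal is correct and follows essentially the same route as the paper's proof: bound $\left | I \right | = O(n)$, charge $O(n)$ to the setup and intersection steps, and argue each iteration of the main loop is $O(1)$ given constant-time edge-probability lookups (the paper uses a HashMap where you use an adjacency-matrix/bitmap). You are in fact slightly more careful than the paper, which asserts the $O(n)$ set intersection and $O(1)$ membership tests without spelling out the representation that makes them so.
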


\begin{proof}
First note that lines 1-6 takes $O \left ( n \right )$ time. 
This is because $\left | I \right | = O \left ( n \right )$,
and hence the loop at line 4 of Algorithm~\ref{algo:generateI}
can take $O \left ( n \right )$ time. Further the set intersection
at line 6 also takes $O \left ( n \right )$ time.
We need to show that the for loop in line 7 is $O \left ( n \right )$,
that is each iteration of the loop takes $O \left ( 1 \right )$ time. 
Assume that it takes constant time to find out the probability of an edge.
This is a valid assumption, as the edge probabilities can be stored as a
HashMap and hence for an edge $e$, in constant time we can find out $p(e)$.
With this assumption, it is easy to show that lines 8-13 takes constant time.
This is because, they are either constant number of multiplications, or adding 
one element to a set. Thus total time complexity is $O \left ( n \right )$. 
\end{proof}

\begin{lemma}
\label{lemma:complexity2}
The runtime of Algorithm~\ref{algo:generateX} is $O \left ( n \right )$.
\end{lemma}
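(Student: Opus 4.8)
The plan is to mirror almost exactly the argument used for Lemma~\ref{lemma:complexity1}, since Algorithm~\ref{algo:generateX} is structurally identical to Algorithm~\ref{algo:generateI}: it builds an auxiliary vertex set $S$ from the input tuple set (here $X$ rather than $I$), intersects $S$ with the neighborhood $\Gamma(m)$ of the newly added vertex $m = \max(C')$, and then runs a single loop over the input tuples that conditionally copies each surviving tuple into $X'$ after one probability update. So the whole proof reduces to two claims: (i) the input set $X$ has size $O(n)$, and (ii) each iteration of the final loop costs $O(1)$.

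For claim (i), I would invoke the characterization of $X$ established in Lemma~\ref{obs:X}: every tuple $(v,s) \in X$ has $v \notin C'$ and $v < \max(C')$, so the vertices appearing in $X$ are distinct elements of $V$ and hence $|X| = O(n)$. Given this bound, the initialization line is $O(1)$, the first loop that collects the vertices of $X$ into $S$ runs $O(n)$ times, and the set intersection $S \leftarrow S \cap \Gamma(m)$ can be performed in $O(n)$ time (e.g.\ by testing membership of each element of $S$ in $\Gamma(m)$, treating adjacency lookups as constant time). For claim (ii), I would note that the body of the final loop performs only a membership test $v \in S$, a constant number of multiplications to compute $q' \cdot s \cdot p(\{v,m\})$ and $s' = s \cdot p(\{v,m\})$, one comparison against $\alpha$, and at most one insertion into $X'$; under the same assumption used in Lemma~\ref{lemma:complexity1} that an edge probability $p(e)$ can be retrieved in $O(1)$ time from a hash map, each iteration is therefore $O(1)$. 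Multiplying the $O(n)$ iterations by the $O(1)$ per-iteration cost and adding the $O(n)$ setup cost yields total runtime $O(n)$.

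I do not expect any genuine obstacle here; the lemma is a routine twin of Lemma~\ref{lemma:complexity1}, and the only point requiring a sentence of justification is the size bound $|X| = O(n)$, which is not immediate from the algorithm text alone but follows cleanly from Lemma~\ref{obs:X}. (I would also silently read the first loop of Algorithm~\ref{algo:generateX} as iterating over $(v,s) \in X$ rather than over $I$, matching the analogous loop in Algorithm~\ref{algo:generateI}, since iterating over $X$ is what makes the collected set $S$ the correct vertex set to intersect with $\Gamma(m)$.) With these two small points dispatched, the $O(n)$ bound follows identically to the previous lemma.

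\begin{proof}
The structure of Algorithm~\ref{algo:generateX} is identical to that of Algorithm~\ref{algo:generateI}, and the analysis parallels that of Lemma~\ref{lemma:complexity1}. By Lemma~\ref{obs:X}, every tuple $(v,s)$ in the input set $X$ satisfies $v \notin C'$ and $v < \max(C')$, so the vertices appearing in $X$ are distinct elements of $V$; hence $|X| = O(n)$. Consequently the loop that collects these vertices into $S$ runs $O(n)$ times, and the set intersection $S \leftarrow S \cap \Gamma(m)$ takes $O(n)$ time, so lines~1 through the intersection cost $O(n)$ in total.

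It remains to show that the final loop over $X$ runs in $O(n)$ time, i.e.\ that each iteration is $O(1)$. As in Lemma~\ref{lemma:complexity1}, we assume that the probability $p(e)$ of an edge $e$ can be retrieved in constant time, which is valid if the edge probabilities are stored in a hash map. Under this assumption each iteration performs only a membership test $v \in S$, a constant number of multiplications to compute $q' \cdot s \cdot p(\{v,m\})$ and $s' = s \cdot p(\{v,m\})$, one comparison against $\alpha$, and at most one insertion into $X'$, all of which take $O(1)$ time. Since the loop executes $|X| = O(n)$ times, its total cost is $O(n)$. Adding the $O(n)$ setup cost gives an overall runtime of $O(n)$.
\end{proof}
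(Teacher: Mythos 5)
Your proof is correct and takes exactly the route the paper intends: the paper omits its own proof of Lemma~\ref{lemma:complexity2}, stating only that it is similar to the proof of Lemma~\ref{lemma:complexity1}, and your argument is precisely that adaptation, including the same hash-map assumption for $O(1)$ edge-probability lookups and the $O(n)$ bound on the tuple set via Lemma~\ref{obs:X}. Your silent correction of the first loop of Algorithm~\ref{algo:generateX} to iterate over $X$ rather than $I$ is also right --- as written it is a typo, since $I$ is not even a parameter of GenerateX --- so nothing further is needed.
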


We omit the proof of the above lemma since it is similar to the proof
of Lemma~\ref{lemma:complexity1}.

\begin{observation}
\label{obs:opt}
The worst-case runtime of any algorithm that can output all maximal cliques of an uncertain graph on $n$ vertices is $\Omega \left ( \sqrt{n} \cdot 2^n \right )$.
\end{observation}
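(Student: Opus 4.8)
The plan is to derive the lower bound from an output-size argument: any correct enumeration algorithm must, at minimum, write down every $\alpha$-maximal clique it reports, so its runtime is bounded below by the total size of the output on a worst-case instance. First I would invoke the construction of Lemma~\ref{lem:maxCliq}, which exhibits, for any $n \ge 3$ and $0 < \alpha < 1$, an uncertain graph on $n$ vertices possessing exactly $\binom{n}{\lfloor n/2 \rfloor}$ distinct $\alpha$-maximal cliques. A crucial feature of that construction that I would make explicit is that every one of these cliques consists of exactly $\lfloor n/2 \rfloor$ vertices; hence merely emitting a single clique costs $\Omega(n)$ time, since $\Omega(n)$ vertex identifiers must be written.

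Combining these two facts, the total work performed by any algorithm on this instance is at least the number of cliques times the per-clique output cost, namely $\Omega\left( \binom{n}{\lfloor n/2 \rfloor} \cdot n \right)$. The remaining step is purely asymptotic: applying Stirling's approximation to the central binomial coefficient gives $\binom{n}{\lfloor n/2 \rfloor} = \Theta\left( 2^n / \sqrt{n} \right)$, and substituting this in yields $\Omega\left( (2^n/\sqrt{n}) \cdot n \right) = \Omega\left( \sqrt{n} \cdot 2^n \right)$, which is the claimed bound.

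I would handle the parity of $n$ uniformly by noting that for both even and odd $n$ every clique in the Lemma~\ref{lem:maxCliq} instance has size $\lfloor n/2 \rfloor = \Theta(n)$, and the estimate $\binom{n}{\lfloor n/2 \rfloor} = \Theta(2^n/\sqrt{n})$ holds in either case, so no separate argument is needed. I do not anticipate a genuine obstacle here; the only point requiring care is the justification that the per-clique cost is truly $\Omega(n)$ rather than $O(1)$, and this rests on the observation that the cliques in the hard instance are large (size $\Theta(n)$). This is precisely what distinguishes the $\Omega(\sqrt{n}\cdot 2^n)$ bound from the weaker $\Omega(2^n/\sqrt{n})$ bound one would get by counting cliques alone, and it is what makes the bound match the $O(n \cdot 2^n)$ upper bound of Theorem~\ref{thm:complexity} up to a $\sqrt{n}$ factor.
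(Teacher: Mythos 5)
Your proposal is correct and matches the paper's proof: both derive the lower bound from the extremal construction behind Theorem~\ref{thm:maximum}, noting there are $\binom{n}{\lfloor n/2\rfloor} = \Theta\left(2^n/\sqrt{n}\right)$ $\alpha$-maximal cliques each of size $\Theta(n)$, so the output alone forces $\Omega\left(\sqrt{n}\cdot 2^n\right)$ time. Your explicit justification that every clique in the hard instance has size $\lfloor n/2\rfloor$ merely spells out what the paper states more briefly (``the size of each uncertain clique can be $\Theta(n)$''), so the two arguments are essentially identical.
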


\begin{proof}
From Theorem~\ref{thm:maximum}, we know that the number of maximal uncertain cliques
can be as much as ${n \choose \lfloor n/2 \rfloor} = \Theta \left (\frac{{2^n}}{\sqrt{n}} \right )$
(using Stirling's Approximation). Since the size of each uncertain
clique can be $\Theta\left ( n \right )$, the total output size can be
$\Omega \left ( \sqrt{n} \cdot 2^n \right )$, which is a lower bound
on the runtime of any algorithm. 
\end{proof}

\begin{lemma}
\label{obs:timefactor}
The worst-case runtime of MULE on an $n$ vertex graph is within a $O(\sqrt{n})$ factor of the runtime of an optimal algorithm for Maximal Clique Enumeration on an uncertain graph.
\end{lemma}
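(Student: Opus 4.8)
The plan is to combine the upper bound on MULE's worst-case runtime from Theorem~\ref{thm:complexity} with the universal lower bound from Observation~\ref{obs:opt} via a single ratio computation. Write $T_{\mathrm{MULE}}(n)$ for the worst-case runtime of MULE and $T_{\mathrm{OPT}}(n)$ for the worst-case runtime of an optimal Maximal Clique Enumeration algorithm on $n$-vertex uncertain graphs; the goal is to show $T_{\mathrm{MULE}}(n)/T_{\mathrm{OPT}}(n) = O(\sqrt{n})$.

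First I would record the two ingredients as explicit inequalities. Theorem~\ref{thm:complexity} gives $T_{\mathrm{MULE}}(n) = O(n \cdot 2^n)$, so there is a constant $C$ with $T_{\mathrm{MULE}}(n) \le C \cdot n \cdot 2^n$ for all sufficiently large $n$. Observation~\ref{obs:opt} establishes a lower bound of $\Omega(\sqrt{n} \cdot 2^n)$ that holds for \emph{every} correct enumeration algorithm; since an optimal algorithm is in particular a correct algorithm, this lower bound applies to it too, so there is a constant $c > 0$ with $T_{\mathrm{OPT}}(n) \ge c \cdot \sqrt{n} \cdot 2^n$ for all sufficiently large $n$.

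I would then form the ratio directly:
$$\frac{T_{\mathrm{MULE}}(n)}{T_{\mathrm{OPT}}(n)} \le \frac{C \cdot n \cdot 2^n}{c \cdot \sqrt{n} \cdot 2^n} = \frac{C}{c}\,\sqrt{n} = O(\sqrt{n}),$$
which is exactly the claimed bound.

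The only point requiring care is conceptual rather than computational: one must interpret Observation~\ref{obs:opt} as a lower bound on the optimum, so that dividing an \emph{upper} bound for MULE by a \emph{lower} bound for the optimum yields a valid upper bound on their ratio (dividing by anything larger would only decrease the quotient). I do not expect a genuine obstacle here; once the two bounds from Theorem~\ref{thm:complexity} and Observation~\ref{obs:opt} are aligned, the $2^n$ factors cancel and the estimate is immediate.
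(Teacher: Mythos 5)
Your proposal is correct and follows exactly the paper's argument: the paper's proof of this lemma is a one-line citation of Theorem~\ref{thm:complexity} and Observation~\ref{obs:opt}, and your ratio computation $\frac{C\,n\,2^n}{c\,\sqrt{n}\,2^n} = O(\sqrt{n})$ simply makes that implicit division explicit. Your added remark that the $\Omega(\sqrt{n}\cdot 2^n)$ bound applies to the optimal algorithm because it applies to \emph{every} correct algorithm is a worthwhile clarification, but it does not change the route.
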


\begin{proof}
The proof follows from Theorem~\ref{thm:complexity} and Observation~\ref{obs:opt}. 
\end{proof}

\subsection{Enumerating Only Large Maximal Cliques}
For a typical input graph, many maximal cliques are small, and may not be interesting to the user. Hence it is helpful to have an algorithm that can enumerate only large maximal cliques efficiently, rather than enumerate all maximal cliques. We now describe an algorithm that enumerates every $\alpha$-maximal clique with more than $t$ vertices, where $t$ is an user provided parameter.

As a first step, we prune the input uncertain graph $\ugraph = (V,E,p)$ by employing techniques described by Modani and Dey~\cite{Modani08}. We apply the ``Shared Neighborhood Filtering" where edges are recursively checked and removed as follows. First drop all edges $\{u,v\} \in E$, such that $\left | \Gamma(u) \cap \Gamma(v) \right | < (t-2)$. Next drop every vertex $v \in V$, that doesn't satisfy the following condition. For vertex $v \in V$, there must exist at least $(t-1)$ vertices in $\Gamma(v)$, such that for $u \in \Gamma(v)$, $\left | \Gamma(u) \cap \Gamma(v) \right | < (t-2)$. 
Let $\ugraph'$ denote the graph resulting from $\ugraph$ after the pruning step.

Algorithm~\ref{algo:largemule} runs on the pruned uncertain graph $\ugraph'$ to enumerate only large maximal cliques. The recursive method in Algorithm~\ref{algo:large} differs from Algorithm~\ref{algo:dfs} as follows. Before each recursive call to method Enum-Uncertain-MC-Large (Algorithm~\ref{algo:large}), the algorithm checks if the sum of the sizes of the current working clique $C'$ and the candidate vertex set $I'$ are greater than the size threshold $t$. If not, the recursive method is not called. This optimization leads to a substantial pruning of the search space and hence a reduction in runtime.

\begin{algorithm}[ht]
\caption{LARGE--MULE($\ugraph,\alpha$,$t$)}
\KwIn{$\ugraph' \mbox{ is the input uncertain graph post pruning}$}
\KwIn{$\alpha, 0 < \alpha < 1$ is the user provided probability threshold}
\KwIn{$t, t \ge 2$ is the user provided size threshold}
\label{algo:largemule}

$\hat{I} \leftarrow \emptyset$ \;
\ForAll{$u \in V$}
{
    $\hat{I} \leftarrow \hat{I} \cup \{  (u,1) \}$
}

Enum-Uncertain-MC-Large($\emptyset$, 1 ,$\hat{I}$, $\emptyset$,$t$) \;

\end{algorithm}
\begin{algorithm}[ht]
\caption{Enum-Uncertain-MC-Large($C,q,I,X$,$t$)}
\KwIn{$C \mbox{ is the current Uncertain Clique being processed}$}
\KwIn{$q \mbox{ is pre-computed } \cliqueprob(C,\ugraph)$}
\KwIn{$I \mbox{ is a set of tuples} \left(u,r\right)$, such that $\forall (u,r) \in I$, $u > \max(C)$, and $\cliqueprob(C \cup \{ u\},\ugraph) = q \cdot r \ge \alpha$, i.e. $C \cup \{ u \}$ is an $\alpha$-clique in $\ugraph$}
\KwIn{$X \mbox{ is a set of tuples} \left(v,s\right)$, such that $\forall (v,s) \in X$, $v \not \in C$, $v < \max(C)$, and $\cliqueprob(C \cup \{v\},\ugraph) = q \cdot s \ge \alpha$ , i.e. $C \cup \{ v \}$ is an $\alpha$-clique in $\ugraph$}
\KwIn{$t \mbox{ is the user provided size threshold}$}
\label{algo:large}

        \If{$I = \emptyset$ and $X = \emptyset$} 
        {
            Output $C$ as $\alpha$-maximal clique  \;
            \Return
        }

	\ForAll{$u,r \in I$ taken in lexicographical ordering of $u$}
	{
                $C' \leftarrow C \cup \{ u \}$ \tcp{Note $m = \max(C') = u$}  
                $q' \leftarrow q \cdot r$  \tcp{$\cliqueprob(C \cup \{ v \},\ugraph)$}  
                $I'  \leftarrow GenerateI(C',q',I)$ \; 

                \If{$\left | C' \right | + \left | I' \right | < t$}
                {
                    continue \;
                }

                $X'  \leftarrow GenerateX(C',q',X)$ \; 
                Enum-Uncertain-MC-Large($C',q',I',X',t)$ \; 
                $X \leftarrow X \cup \{ (u,r) \}$ 
        }
\end{algorithm}

\begin{lemma}
\label{lem:large}
Given an input graph $\ugraph$, LARGE--MULE (Algorithm~\ref{algo:largemule}) enumerates every $\alpha$-maximal clique with more than $t$ vertices.
\end{lemma}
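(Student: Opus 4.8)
The plan is to reduce the correctness of LARGE--MULE to that of MULE (Theorem~\ref{thm:correctness}) by isolating the two modifications — the Shared Neighbourhood pre-filtering that produces $\ugraph'$, and the in-loop size test $|C'|+|I'|<t$ — and showing that neither can discard an $\alpha$-maximal clique $M$ of $\ugraph$ with $|M|>t$. For the filtering step, fix such an $M$. For any edge $\{u,v\}\subseteq M$, the vertices of $M\setminus\{u,v\}$ are at least $t-2$ common neighbours of $u$ and $v$, so the edge fails the deletion condition $|\Gamma(u)\cap\Gamma(v)|<t-2$, and an identical count handles the vertex test. Because the filtering is recursive, I would add a short induction on the order of deletions: if $\{u,v\}\subseteq M$ were the first clique-edge ever removed, then all internal edges of $M$ are still present at that instant, so $u$ and $v$ still share those $\ge t-2$ neighbours, contradicting the deletion condition (this is precisely the guarantee of Modani and Dey~\cite{Modani08}). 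Hence every edge and vertex of $M$ survives; $M$ is still a clique in $\ugraph'$ with unchanged edge probabilities, so $\cliqueprob(M,\ugraph')=\cliqueprob(M,\ugraph)\ge\alpha$, and since $\ugraph'$ has a subset of the edges of $\ugraph$, no vertex can extend $M$ in $\ugraph'$ either. Thus $M$ remains $\alpha$-maximal in $\ugraph'$.

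Next I would replay the reachability induction of Lemma~\ref{lemma:correctness2} on $\ugraph'$, with the single extra obligation that the test $|C'|+|I'|<t$ never fires along the search path to $M$. Write $M=\{c_1<c_2<\dots<c_{|M|}\}$ and $C_i=\{c_1,\dots,c_i\}$. The invariant of Lemma~\ref{obs:I} still holds verbatim, since GenerateI is unchanged, so for every $j>i$ the vertex $c_j$ belongs to $I_i$: it exceeds $\max(C_i)$, and $C_i\cup\{c_j\}\subseteq M$ is an $\alpha$-clique by Observation~\ref{obs:Sub}, with all its edges surviving Step~1. In particular $c_{i+1}\in I_i$, so the loop at $C_i$ reaches $u=c_{i+1}$ and forms $C'=C_{i+1}$, $I'=I_{i+1}$. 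Since $c_{i+2},\dots,c_{|M|}\in I_{i+1}$, we get $|C_{i+1}|+|I_{i+1}|\ge (i+1)+(|M|-(i+1))=|M|>t$, so the size test is false and the recursive call with $C_{i+1}$ is made. Starting from $\emptyset$ and iterating, the call with $C_{|M|}=M$ occurs.

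It then remains to show that when Enum-Uncertain-MC-Large is invoked with $M$ it emits $M$, i.e. that $I=\emptyset$ and $X=\emptyset$ there. Because $M$ is $\alpha$-maximal in $\ugraph'$, Lemma~\ref{obs:I} gives $I=\emptyset$. For $X$ I cannot quote Lemma~\ref{obs:X} directly, since the \texttt{continue} may suppress the update $X\leftarrow X\cup\{(u,r)\}$; instead I would note that the only change to $X$-handling relative to MULE is the omission of some insertions, and that GenerateX merely filters, so by induction the $X$ maintained by LARGE--MULE at every explored node is a subset of the $X$ that MULE would maintain there, i.e. $X_{\text{LARGE}}\subseteq X_{\text{MULE}}$. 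Lemma~\ref{obs:X} applied to MULE shows $X_{\text{MULE}}=\emptyset$ at $M$ (no vertex below $\max(M)$ extends the maximal clique $M$), hence $X_{\text{LARGE}}=\emptyset$ as well. With $I=X=\emptyset$, $M$ is output.

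I expect the main obstacle to be the $X$ bookkeeping in the final step: the \texttt{continue} breaks the strong equality invariant of Lemma~\ref{obs:X}, so completeness must be argued through the one-sided containment $X_{\text{LARGE}}\subseteq X_{\text{MULE}}$ — observing that a smaller $X$ can only make the maximality test pass more readily and therefore can never cause a large maximal clique to be missed (it may, however, cause some non-maximal or small cliques to be emitted, which is outside the scope of this completeness claim). By comparison, the size-bound argument of the second step and the recursive-filtering argument of the first step are routine counting arguments.
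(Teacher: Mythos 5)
Your proposal is correct, and its core step --- the induction along the search path showing that for every prefix $C_i$ of a large maximal clique $M$ the remaining vertices $c_{i+2},\dots,c_{|M|}$ lie in $I_{i+1}$, hence $|C_{i+1}|+|I_{i+1}|\ge |M|>t$ and the size test never fires --- is exactly the argument in the paper's second paragraph (``whenever a call is made with any $C\subseteq C_2$, we always have $|C|+|I|\ge t$''). Where you genuinely differ is in completeness of the argument, in two places the paper leaves implicit. First, the paper's proof never mentions the Shared Neighborhood pre-filtering at all; your induction on the order of deletions, showing a large maximal clique survives the recursive pruning with its probabilities and its maximality intact, is a real addition (and is needed, since the paper only gestures at \cite{Modani08}). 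Second, and more substantively, the paper simply asserts that once the call with $C_2$ is made the clique is emitted, tacitly reusing Lemma~\ref{obs:X} --- but as you observe, the \texttt{continue} in Algorithm~\ref{algo:large} skips the update $X\leftarrow X\cup\{(u,r)\}$, so the two-sided invariant of Lemma~\ref{obs:X} fails for LARGE--MULE; your one-sided containment $X_{\text{LARGE}}\subseteq X_{\text{MULE}}$ (using that GenerateX is a per-element filter, hence monotone, and that the $I$-chains of the two algorithms coincide on nodes both explore) is precisely the repair needed to conclude $X=\emptyset$ at $M$. Conversely, you deliberately omit the paper's first half, which proves that no maximal clique of size below the threshold is emitted; that direction is not demanded by the lemma statement as written, so your scoping is defensible, but note that the paper's proof of that half relies on the very $X$-update suppression you flagged, so your containment lemma would also be the right tool if one wanted to make that half rigorous. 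In short: same skeleton for the key step, but your version fills two gaps in the paper's proof at the cost of dropping its (unrequired) soundness direction.
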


\begin{proof}
First we prove that no maximal clique of size less than $t$ is enumerated by Algorithm~\ref{algo:large}. Consider an $\alpha$-maximal clique $C_1$ in $\ugraph$ with less than $t$ vertices. Also let $m_1 = \max(C_1)$ and $C_1' = C_1 \setminus \{ m_1 \}$.  Note that if $C_1$ is emitted by Algorithm~\ref{algo:large}, then a call must be made to Enum-Uncertain-MC-Large with $C_1$. Since the Algorithm adds vertices in lexicographical ordering, this implies that a call must be made to Enum-Uncertain-MC-Large with $C_1'$ before the call is made with $C_1$. In the worst case, let us consider that the search tree reaches the execution point where Enum-Uncertain-MC-Large is called with $C_1'$. Consider the execution of the algorithm where $m_1$ is added to $C = C_1'$ to form $C' = C_1$. Since $C_1$ is an $\alpha$-maximal clique, $I'$ will become NULL which implies $\left | I' \right | = 0$. We know that $\left | C_1 \right | < t$. Thus $\left | C_1 + I' \right |$ will also be less than $t$ and the If condition (line 8) will succeed. This will result in the execution of the continue statement. Thus Enum-Uncertain-MC-Large will not be called with $C_1$ implying that $C_1$ is not enumerated.

Next we show that any maximal clique of size at least $t$ is enumerated by Algorithm~\ref{algo:large}. Consider an $\alpha$-maximal clique $C_2$ in $\ugraph$ of size at least $t$. We note that the ``If'' condition in line 8 is never satisfied in the search path ending with $C_2$ and hence a call is made to the method with Enum-Uncertain-MC-Large with $C_2$. This is easy to see as whenever a call is made to Enum-Uncertain-MC-Large with any $C \subseteq C_2$, since $C_2$ is large, we always have $\left | C \right | + \left | I \right | \ge t$. 
\end{proof}

\section{Experimental Results}
\label{sec:expts}

We report the results of an experimental evaluation of our algorithm. We implemented the algorithm using Java. We ran all experiments on a system with a 3.19 GHz Intel(R) Core(TM) i5 processor and 4 GB of RAM, with heap space configured at 1.5GB.


%
\begin{table*}
\caption{Input Graphs}
\label{table:input}
\centering
\small
\begin{tabular}{c c c c c} 
\hline\hline 
Input Graph & Category & Description & \# Vertices & \# Edges \\ [0.5ex]
\hline
Fruit-Fly & Protein Protein Interaction network & PPI for Fruit Fly from STRING Database & 3751 & 3692 \\
DBLP10 & Social network & Collaboration network from DBLP & 684911 & 2284991 \\
p2p-Gnutella08 & Internet peer-to-peer networks & Gnutella network August 8 2002 & 6301 & 20777 \\
p2p-Gnutella04 & Internet peer-to-peer networks & Gnutella network August 4 2003 & 10879 & 39994 \\
p2p-Gnutella09 & Internet peer-to-peer networks & Gnutella network August 9 2003 & 8114 & 26013 \\
ca-GrQc & Collaboration networks & Arxiv General Relativity & 5242 & 28980 \\
wiki-vote & Social networks & wikipedia who-votes-whom network & 7118 & 103689 \\ 
BA5000 & Barab{\'a}si$-$Albert random graphs & Random graph with 5K vertices & 5000 & 50032 \\
BA6000 & Barab{\'a}si$-$Albert random graphs & Random graph with 6K vertices & 6000 & 60129 \\
BA7000 & Barab{\'a}si$-$Albert random graphs & Random graph with 7K vertices & 7000 & 70204 \\
BA8000 & Barab{\'a}si$-$Albert random graphs & Random graph with 8K vertices & 8000 & 80185 \\
BA9000 & Barab{\'a}si$-$Albert random graphs & Random graph with 9K vertices & 9000 & 90418 \\
BA10000 & Barab{\'a}si$-$Albert random graphs & Random graph with 10K vertices & 10000 & 99194 \\
\hline
\end{tabular}
\vspace{-0.5cm}
\end{table*}

\begin{figure*}[!ht]
\vspace{-0.5cm}
\begin{center}

$\begin{array}{cc}

\subfloat[$\alpha = 0.9$]{\label{fig:0a}\includegraphics[width=0.35\textwidth]{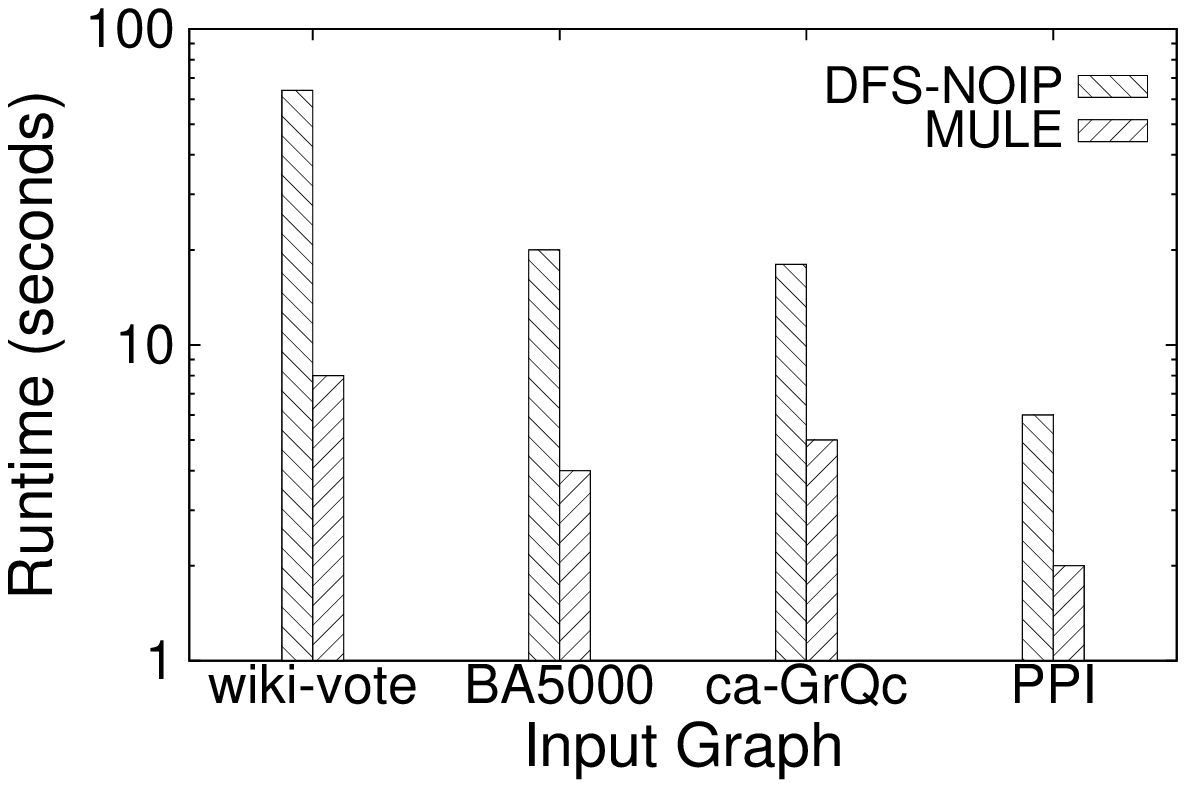}}  

\subfloat[$\alpha = 0.8$]{\label{fig:0b}\includegraphics[width=0.35\textwidth]{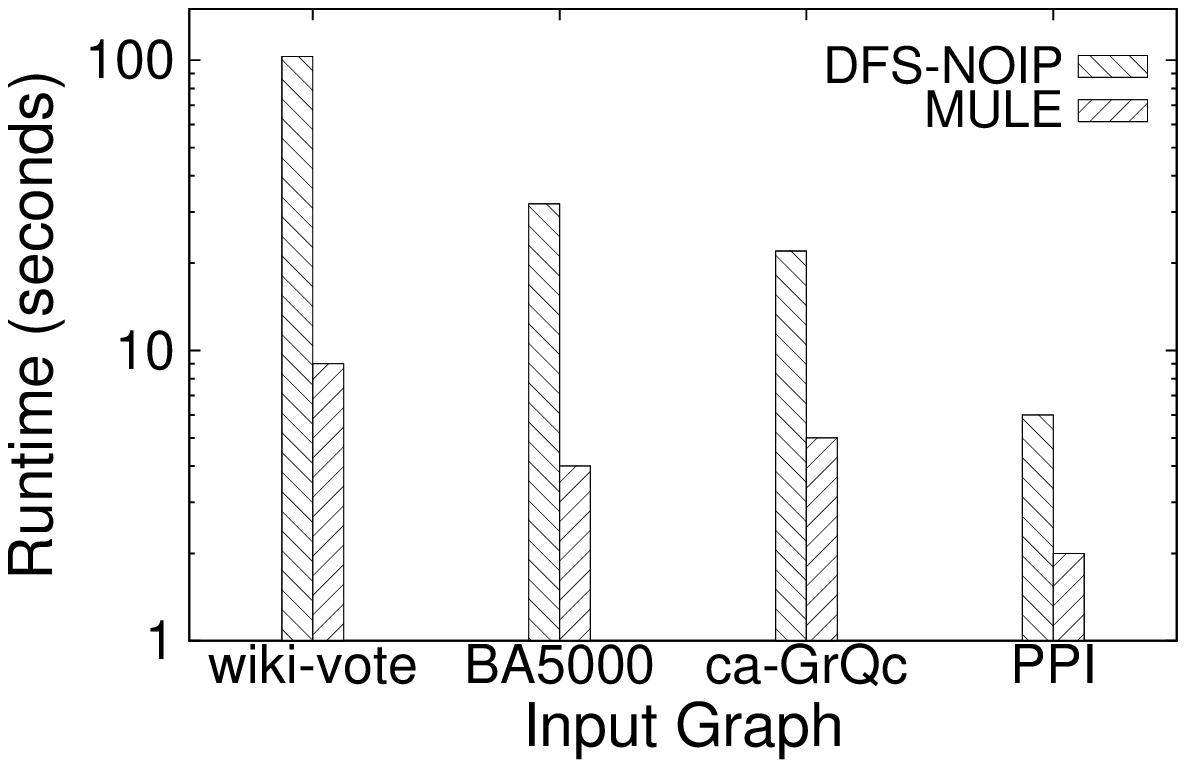}} 

\end{array}$

\vspace{-0.9cm}

$\begin{array}{cc}

\subfloat[$\alpha = 0.0001$]{\label{fig:0c}\includegraphics[width=0.35\textwidth]{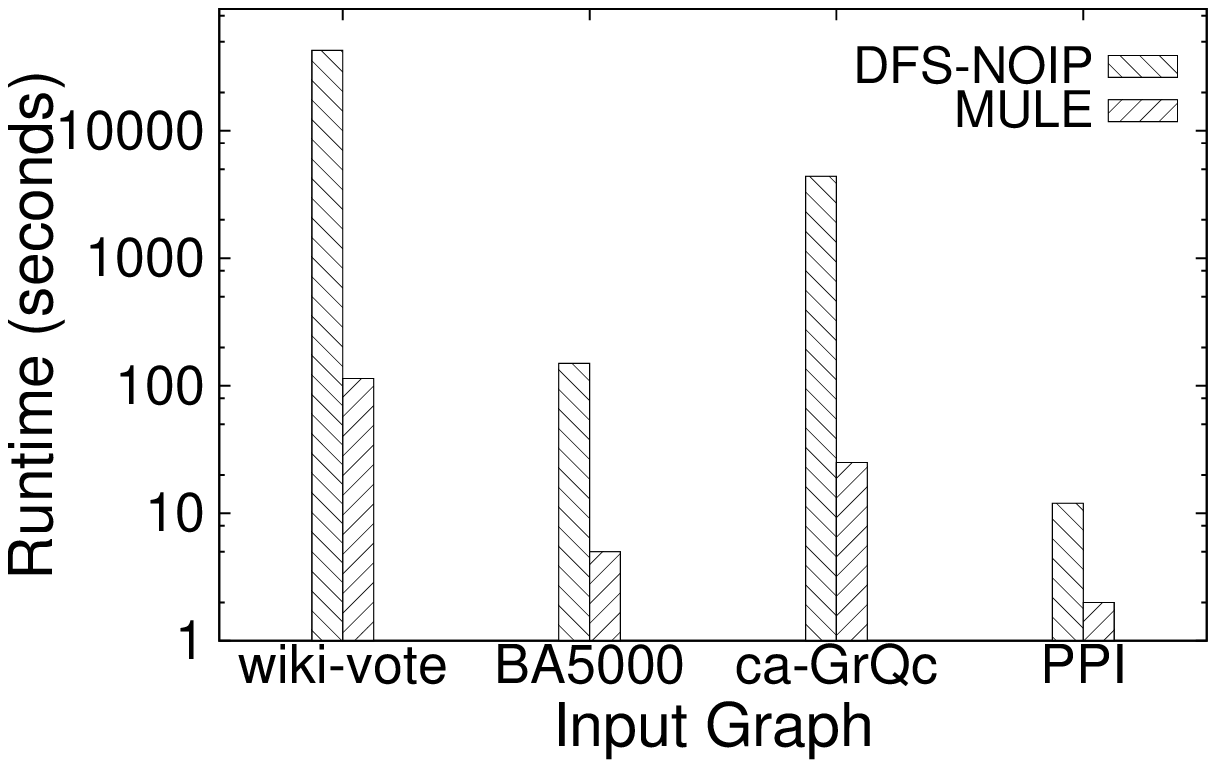}} 

\subfloat[$\alpha = 0.0005$]{\label{fig:0d}\includegraphics[width=0.35\textwidth]{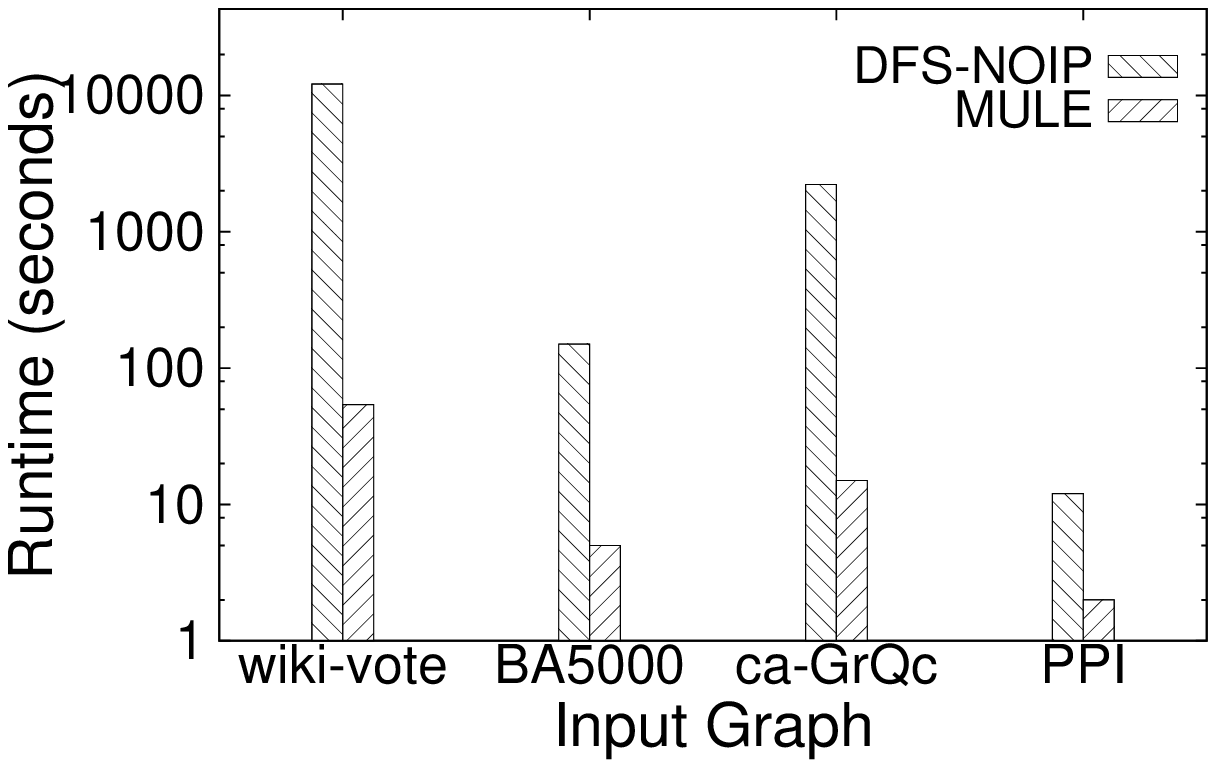}} 

\end{array}$

\end{center}
\vspace{-0.3cm}
\caption{Comparison of Simple and Optimized Depth First Search approaches. 
The Y--Axis is in log--scale.} 
\vspace{-0.5cm}
\label{fig:0}
\end{figure*}

\begin{figure*}[!ht]
\vspace{-0.5cm}
\begin{center}

$\begin{array}{cc}

\subfloat[Random Graphs]{\label{fig:1a}\includegraphics[width=0.35\textwidth]{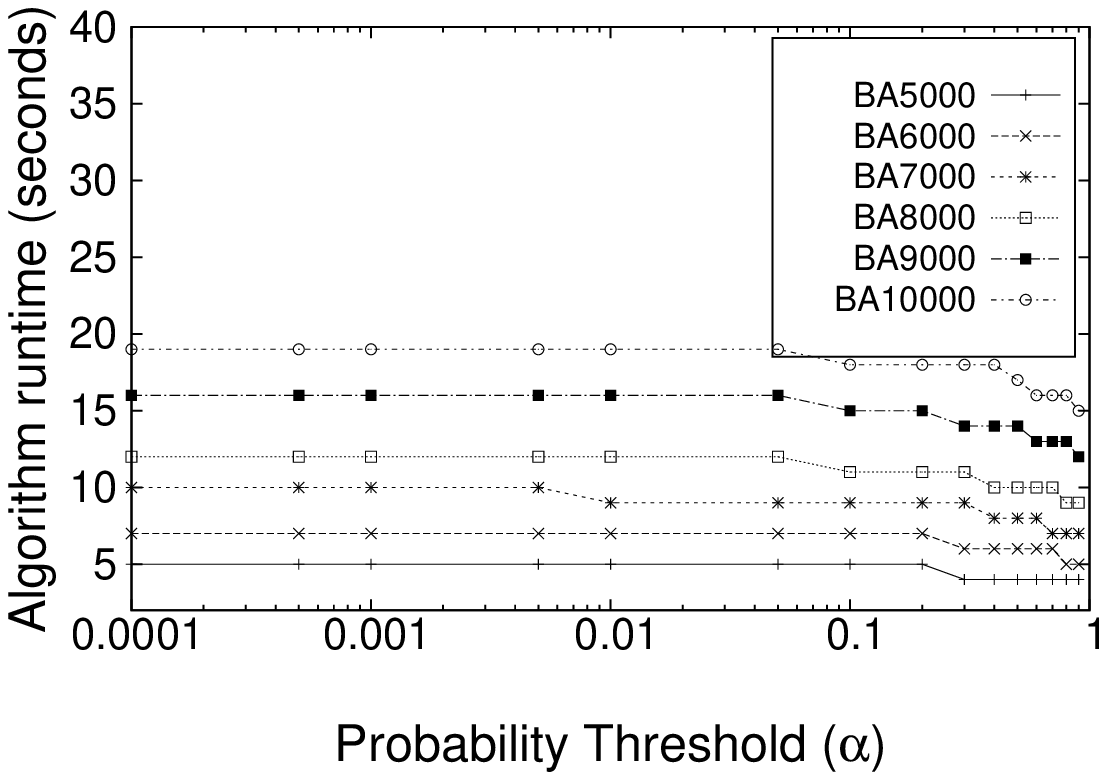}} &

\subfloat[Semi--synthetic and Real Graphs]{\label{fig:1b}\includegraphics[width=0.35\textwidth]{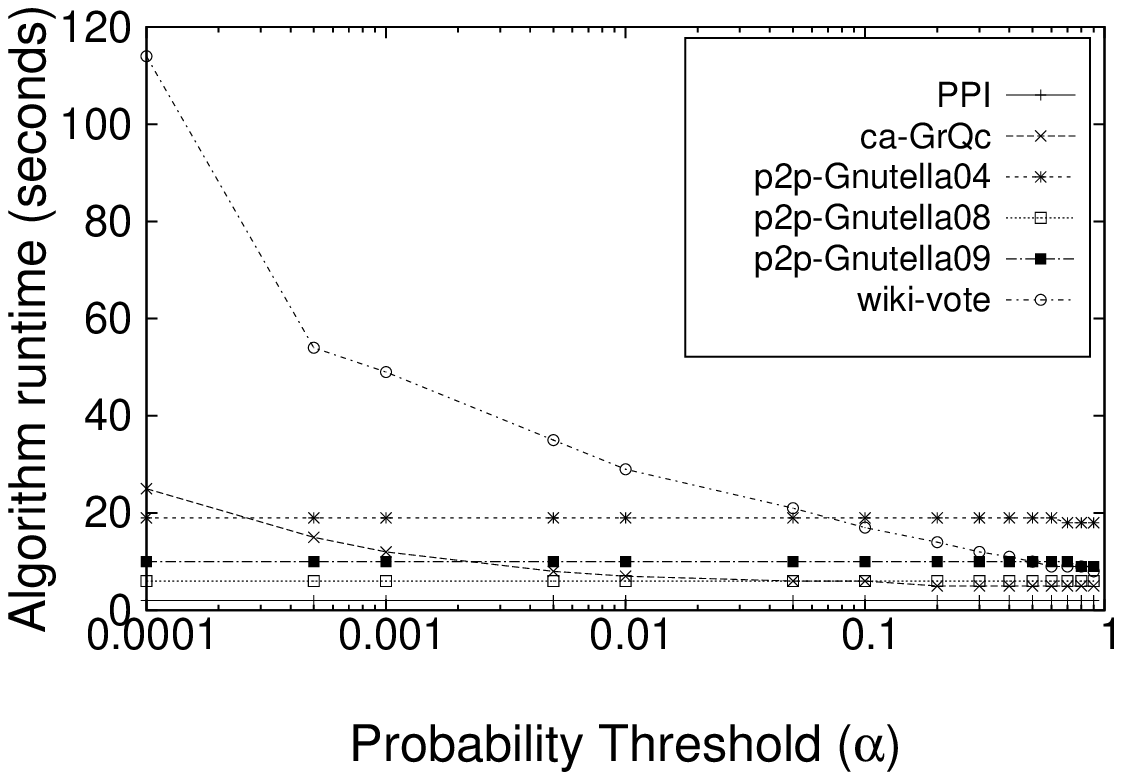}}
\end{array}$
\vspace{-0.15cm}
\caption{Runtime vs Alpha ($\alpha$). The X--Axis is in log--scale} \label{fig:1}
\end{center}
\vspace{-0.5cm}
\end{figure*}

\begin{figure*}[!ht]
\vspace{-0.85cm}
\begin{center}

$\begin{array}{cc}

\subfloat[Random Graphs]{\label{fig:3a}\includegraphics[width=0.35\textwidth]{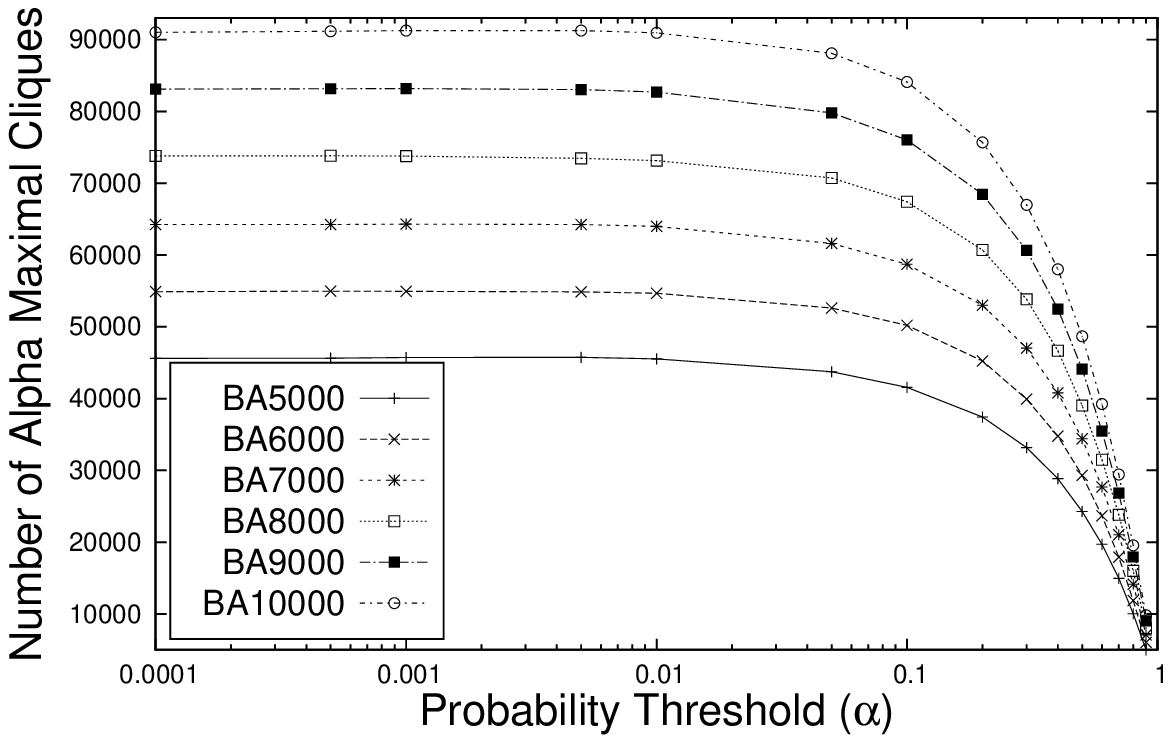}} &

\subfloat[Semi--synthetic and Real Graphs]{\label{fig:3b}\includegraphics[width=0.35\textwidth]{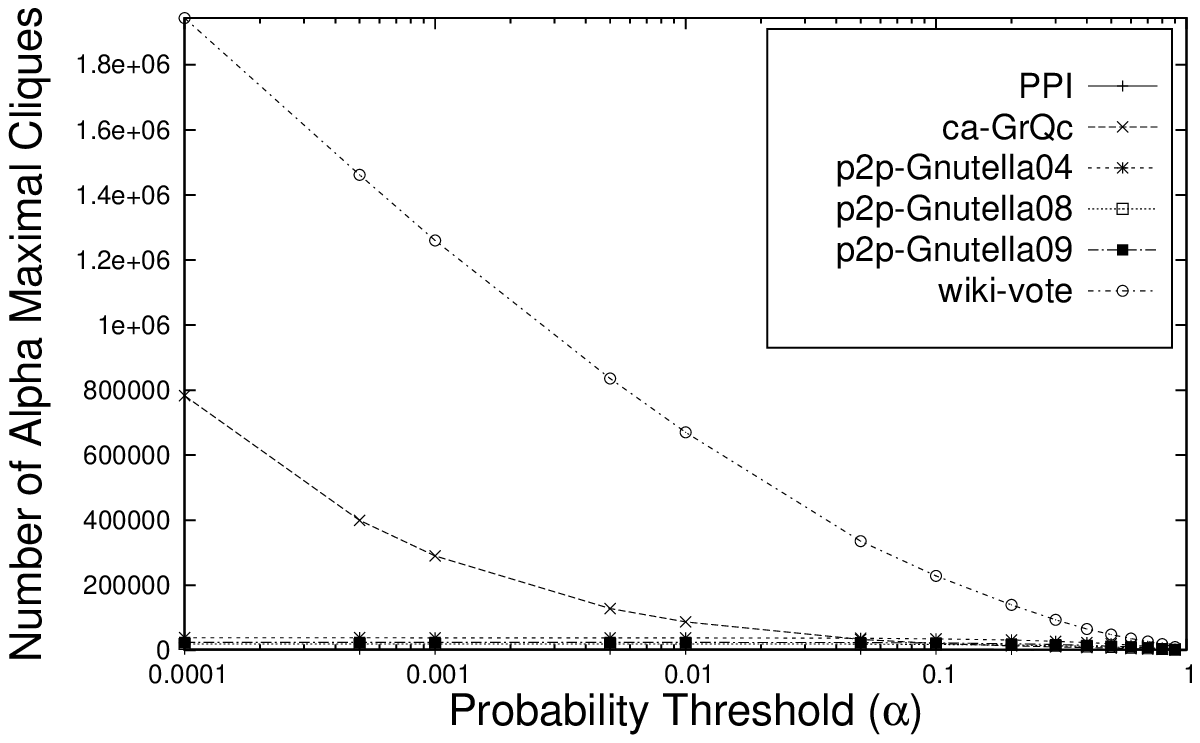}}

\end{array}$
\vspace{-0.25cm}
\caption{No of $\alpha$-maximal cliques vs Alpha ($\alpha$). The X--Axis is in log--scale} \label{fig:3}
\end{center}
\vspace{-0.5cm}
\end{figure*}

\paragraph{Input Data:}
Details of the input graphs that we used are shown in
Table~\ref{table:input}.  

The first set of graphs consists of real world uncertain graphs shared by authors of~\cite{ZLGZ2010}~and~\cite{KBGG2014}. These include a protein-protein interaction (PPI) network of a Fruit Fly obtained by integrating data from the BioGRID~\footnote{http://thebiogrid.org/} database with that form the STRING~\footnote{http://string-db.org/} database, and the DBLP~\footnote{http://dblp.uni-trier.de/} dataset from authors of~\cite{KBGG2014}, which is an uncertain network predicting future co-authorship. The PPI network is an uncertain graph where each vertex represents a protein and two vertices are connected by an edge with a probability representing the likelihood of interaction between the the two proteins. The DBLP network represents co-authorship in academic articles. Each vertex in this network represents an author. Two vertices are connected by an edge with a probability that depends on the ``strength'' of their co-authorship, which is computed as $1- e^{-c/10}$, where $c$ is the number of papers co--authored.

The second set of graphs was obtained from the Stanford Large Network 
Collection~\cite{SNAP}, and includes graphs representing Internet p2p networks, 
collaboration networks, and an online social network.
The p2p-Gnutella graphs represent peer to peer file sharing networks, 
where each vertex in the graph represents a computer and the edges 
represent the communication among them. The p2p-Gnutella04,
p2p-Gnutella08 and p2p-Gnutella09 graphs represent communications
occurring on 4th, 8th and 9th of August, 2002 respectively.
The ca-GrQc graph represents the collaboration network among scientist 
working on General Relativity and Quantum Cosmology. Each vertex in the
graph is a scientist and two vertices are connected by an edge if the 
corresponding scientists have co-authored a paper.
Finally the wiki-vote graph represents the voting that occurs while
selecting a new wikipedia administrator. Each vertex is either a wikipedia
admin or wikipedia user and the edges represent the votes that each 
admin / user casts in favor of a candidate. The candidate is also a
wikipedia user and hence is represented by a vertex in the graph.
For all these graphs, the uncertain graphs were created from these 
deterministic graphs by assigning edge probabilities uniformly at random.  
Hence these can be considered as semi--synthetic uncertain graphs.

The third set of input graphs was
synthetically generated using the Barab{\'a}si$-$Albert model for
random graphs~\cite{RB2002}. Then the edges were assigned
probabilities uniformly at random from $[0,1]$. \\

{\bf Comparison with other approaches.}
We compare our algorithm with another algorithm based on depth-first-search, which we call DFS-NOIP (DFS with NO Incremental Probability Computation), described in Algorithm~\ref{algo:naive}. 
This algorithm also performs a depth first search to enumerate all $\alpha$--maximal cliques but does not compute the probabilities incrementally like MULE does. 


\begin{algorithm}[ht]
\caption{DFS--NOIP($C$,$I$)}
\label{algo:naive}

$I_{copy} \leftarrow I$ \;

\ForAll{$u \in I_{copy}$}
{
    \If{$u \le max(C)$ OR $\cliqueprob(C \cup \{ u \}) < \alpha$}
    {
        $I \leftarrow I \setminus \{ u \}$
    }
}

\If{$I = \emptyset$}
{
    \If{$C$ is an $\alpha$-maximal clique}
    {
        Output $C$ as $\alpha$-maximal clique \;
        \Return \;
    }
}

\ForAll{$v \in I$}
{

    $C' \leftarrow C \cup \{ v \}$ \;

    \If{$C'$ is an $\alpha$-maximal clique}
    {
        Output $C'$ as $\alpha$-maximal clique \;
    }
    \Else
    {
        $I' \leftarrow I \cap \Gamma( v )$ \;
        DFS--NOIP($C'$,$I'$) \;
    }
}
\end{algorithm}

Figure~\ref{fig:0} compares the performance of MULE with DFS--NOIP. The results show that MULE performs much better than DFS--NOIP. For instance, for the graph wiki--vote with $\alpha=0.9$ DFS--NOIP took $64$ seconds while MULE took only $8$ secs. The relative performance results hold true over a wide range of input graphs and values of $\alpha$, including synthetic and real-world graphs, and small and large values of $\alpha$. For $\alpha = 0.0001$, MULE took only $25$ secs to enumerate all maximal cliques in ca-GrQc, while DFS--NOIP took over $4400$ secs. On the wiki--vote input graph with probability threshold 0.9, MULE took $8$ seconds while DFS--NOIP took $64$ seconds. For the same graph, with probability threshold $0.0001$, MULE took 114 secs, while DFS--NOIP took more than 11 hours.

{\bf Dependence on $\alpha$.} We measured the runtime of enumeration
as well as the output size, (the number of $\alpha$-maximal cliques
that were output) for different values of $\alpha$ and for the various
input graphs described above. The dependence of the runtime on
$\alpha$ is shown in Figure~\ref{fig:1}, and the number of cliques as
a function of $\alpha$ is shown in Figure~\ref{fig:3}. We note that as
$\alpha$ increases, the number of maximal cliques, and the time of
enumeration both drop sharply. The decrease in runtime is because with
a larger value of $\alpha$, the algorithm is able to prune search
paths aggressively early in the enumeration.

We note that the number of $\alpha$-maximal cliques does not have to
always decrease as $\alpha$ increases. Sometimes it is possible that
the number of $\alpha$-maximal cliques increases with $\alpha$. This
is because as $\alpha$ increases, a large maximal clique may split
into many smaller maximal cliques. However, these differences are
negligible, and are not visible in the plots.

{\bf Dependence on Size of Output.}  Figure~\ref{fig:4} shows the change in runtime with respect to the number of $\alpha$-maximal cliques enumerated, for the randomly generated graphs. It can be seen that the runtime of the algorithm is almost proportional to the number of maximal cliques in the output. This shows that the algorithm runtime scales well with the number of $\alpha$-maximal cliques in output. This comparison was not done for real world or semi--synthetic graphs as these graphs have different structural properties, hence different sizes of maximal cliques and thus there is no meaningful way to interpret the results.

\begin{figure}[!ht]
\vspace{-0.5cm}
\begin{center}


\subfloat[Random Graphs]{\label{fig:4a}\includegraphics[width=0.35\textwidth]{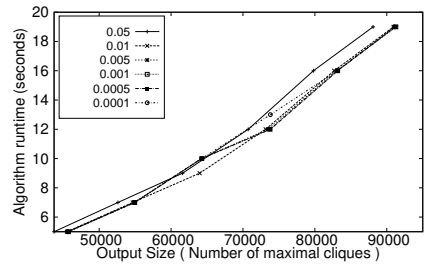}} 



\vspace{-0.2cm}

\caption{Runtime vs Output Size} \label{fig:4}
\end{center}
\vspace{-0.5cm}
\end{figure}

{\bf Enumerating Large Maximal Cliques.} Figures~\ref{fig:5}~and~\ref{fig:6} show the runtime of LARGE--MULE (Algorithm~\ref{algo:largemule}) and the output size respectively as a function of $t$, the minimum size of an $\alpha$-maximal clique that is output. As $t$ increases, both runtime and output size decrease substantially. For instance, MULE takes $76797$ seconds to enumerate all uncertain maximal cliques from the DBLP dataset (for probability threshold $0.9$).  However, LARGE--MULE takes only $32$ seconds when $t=3$. Similarly, for input graph ca-GrQc and $\alpha=0.0001$, MULE takes $125$ seconds, while LARGE--MULE takes 10 seconds when $t=6$ and 6 seconds when $t=7$.

\begin{figure*}[!ht]
\vspace{-1cm}
\begin{center}

$\begin{array}{cc}

\subfloat[BA10000]{\label{fig:5a}\includegraphics[width=0.35\textwidth]{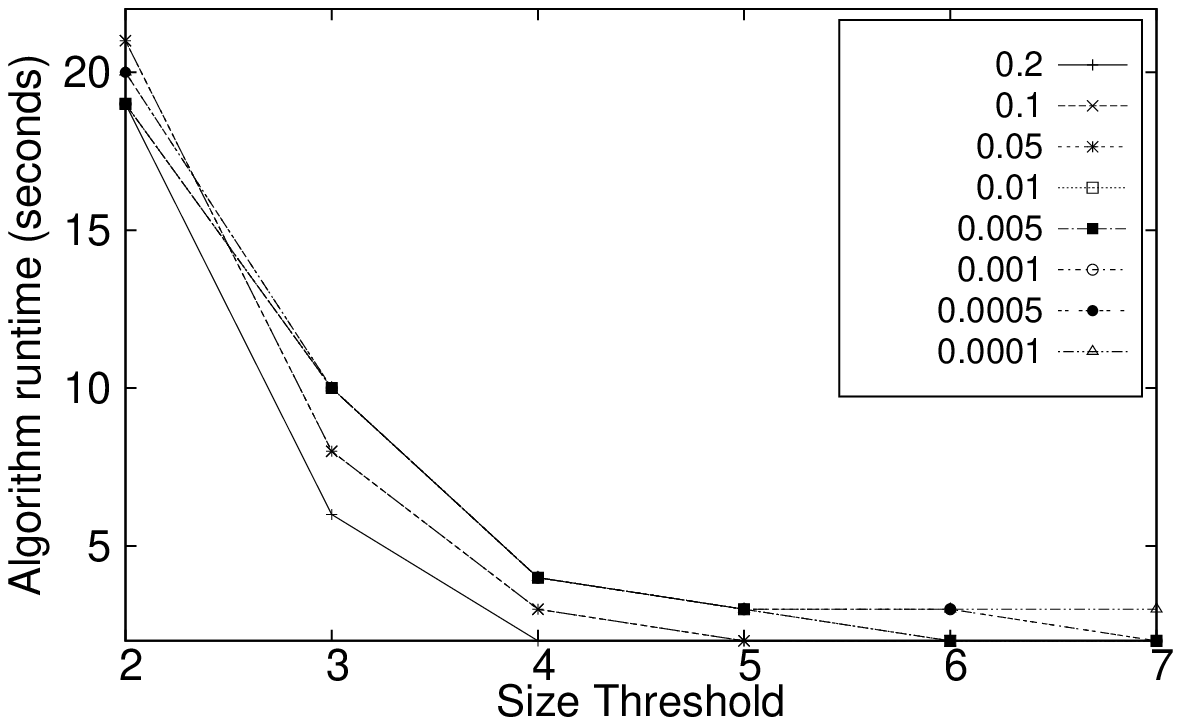}} &

\subfloat[ca-GrQc]{\label{fig:5b}\includegraphics[width=0.35\textwidth]{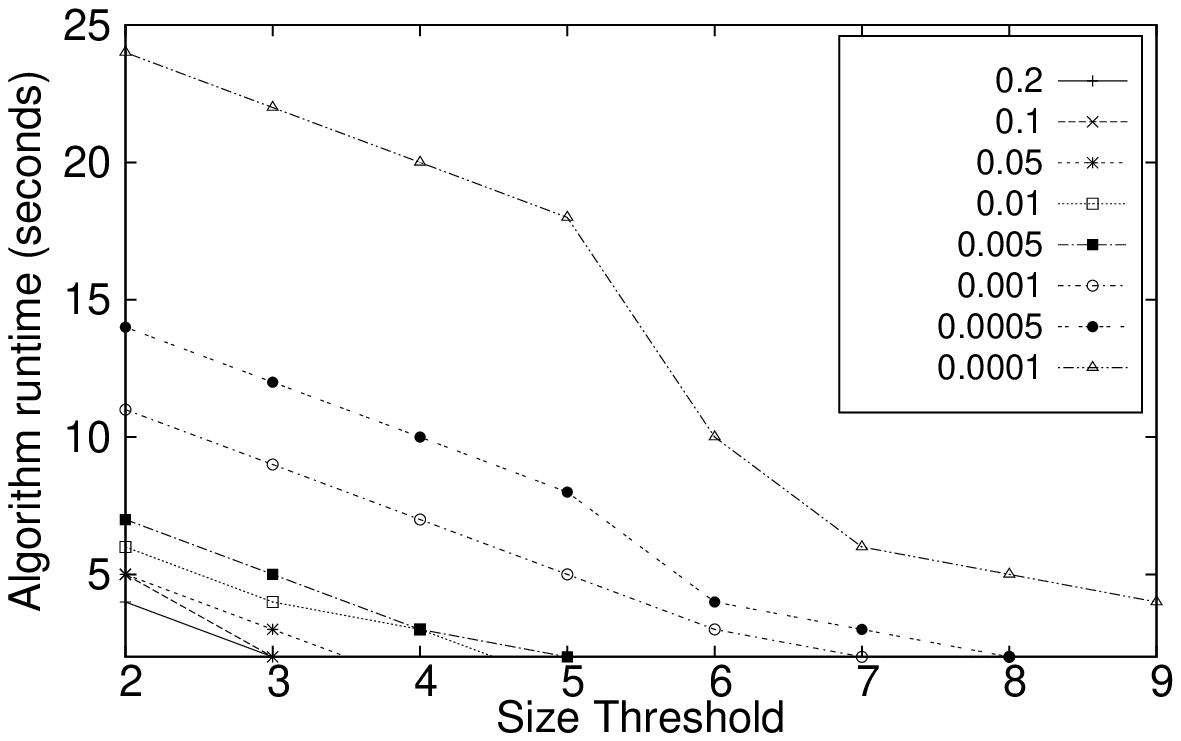}}



\subfloat[DBLP]{\label{fig:5c}\includegraphics[width=0.35\textwidth]{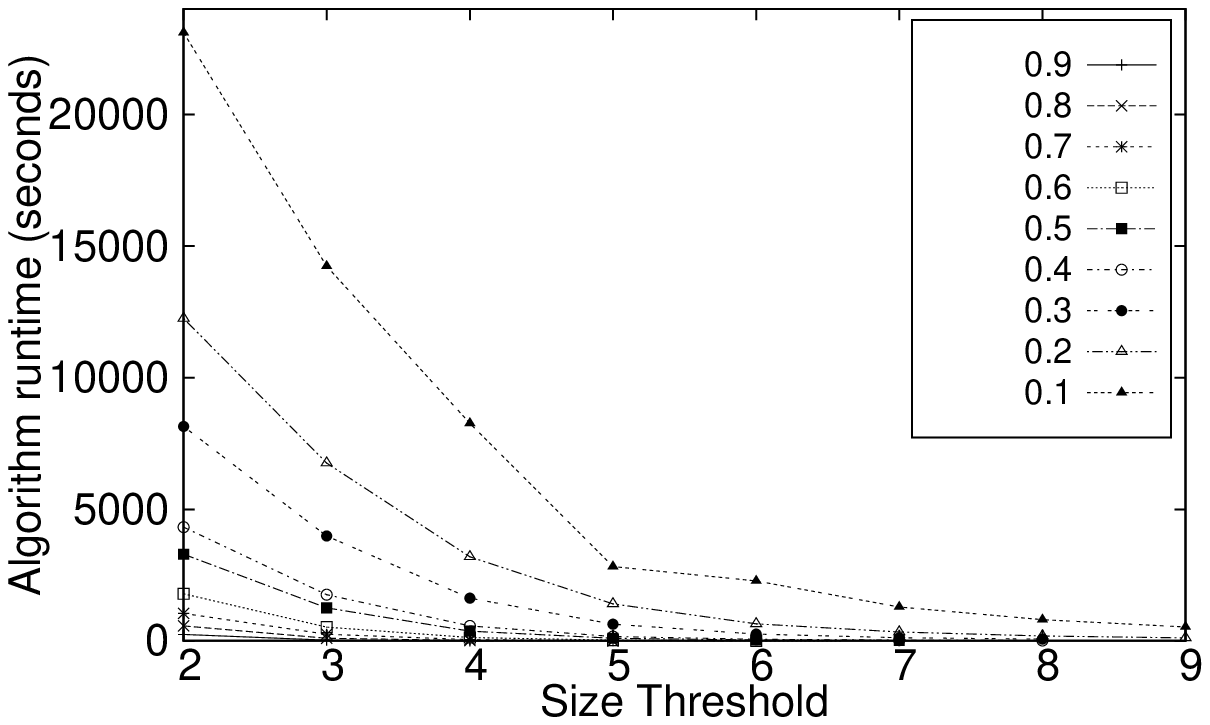}} 
\end{array}$

\vspace{-0.2cm}

\caption{Runtime vs Size threshold of enumerated uncertain maximal cliques} \label{fig:5}
\end{center}
\vspace{-0.5cm}
\end{figure*}

\begin{figure*}[!ht]
\vspace{-1cm}
\begin{center}

$\begin{array}{ccc}

\subfloat[BA10000]{\label{fig:6a}\includegraphics[width=0.35\textwidth]{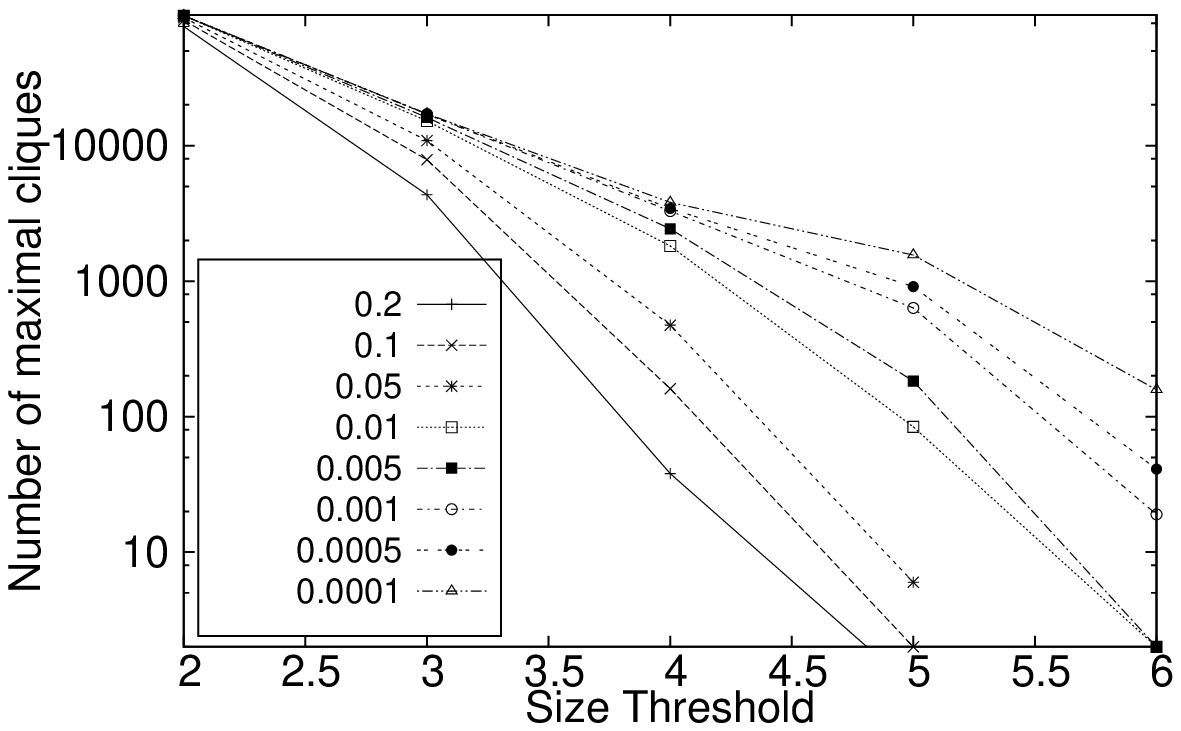}} &

\subfloat[ca-GrQc]{\label{fig:6b}\includegraphics[width=0.35\textwidth]{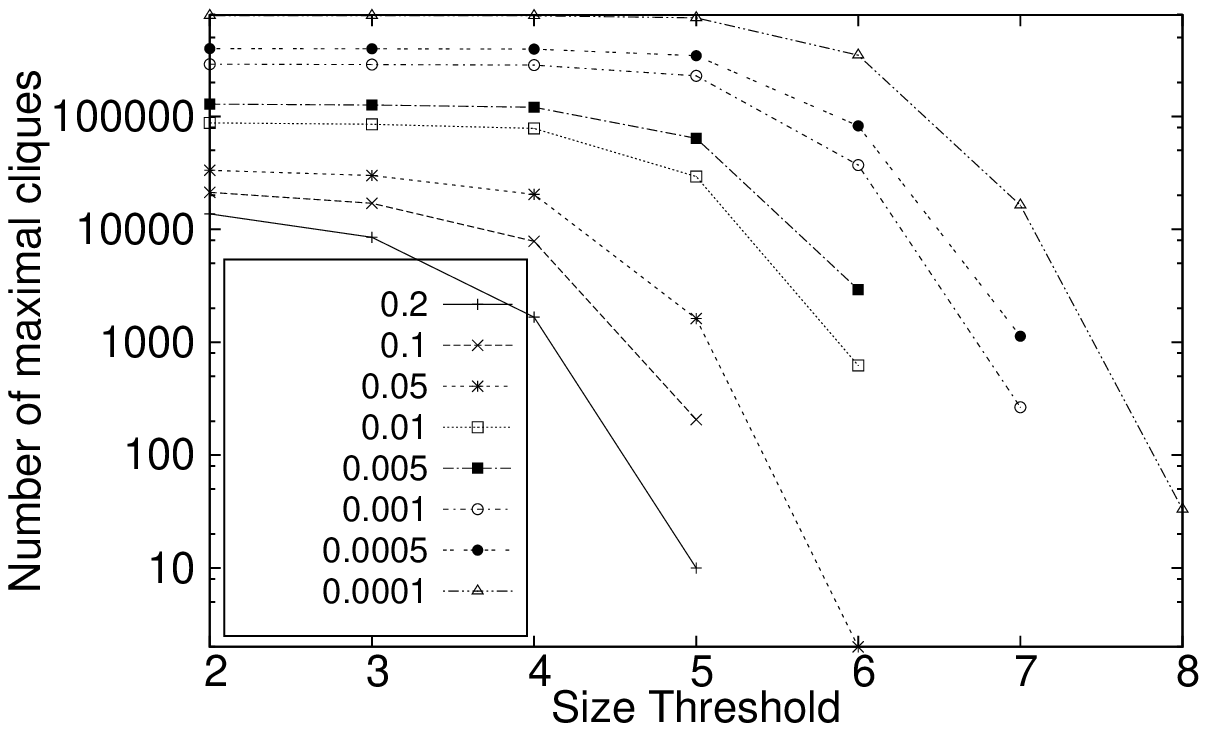}}


\subfloat[DBLP]{\label{fig:6c}\includegraphics[width=0.35\textwidth]{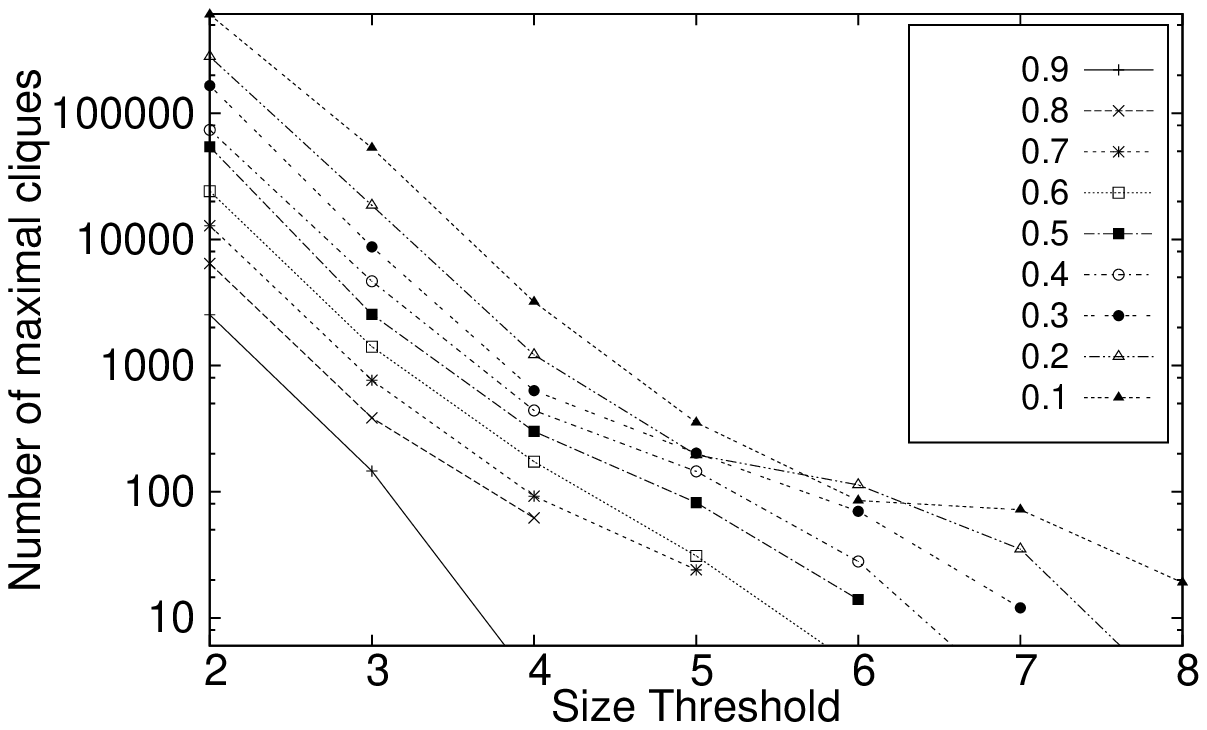}} 
\end{array}$

\vspace{-0.2cm}

\caption{Number of $\alpha$-maximal cliques vs threshold on minimum size of uncertain maximal clique} 
\label{fig:6}
\end{center}
\vspace{-0.5cm}
\end{figure*}


\remove{
\item From Figure~\ref{fig:2}, we can see that the algorithm runtime
  increases almost linearly with input size for random graphs.
  This comparison was not performed for semi--synthetic or real
  world graphs as they have different underlying structures and
  hence the comparison does not give any meaningful interpretation.
}

%

\section{Conclusion}

We present a systematic study of the enumeration of maximal cliques from an uncertain graph, starting from a precise definition of the notion of an $\alpha$-maximal clique, followed by a proof showing that the maximum number of $\alpha$-maximal cliques in a graph on $n$ vertices is exactly $n \choose \lfloor n/2 \rfloor$, for $0 < \alpha < 1$. We present a novel algorithm, MULE, for enumerating the set of all $\alpha$-maximal cliques from a graph, and an analysis showing that the worst-case runtime of this algorithm is $O \left ( n \cdot 2^n \right)$. We present an experimental evaluation of MULE showing its performance, and an extension for faster enumeration of large maximal cliques.

An interesting open problem is to design an algorithm for enumerating maximal cliques from an uncertain graph whose time complexity is worst-case optimal, $O \left ( \sqrt{n} \cdot 2^n \right)$.
Finally, there are various dense substructures that can be found in a network. Some examples include
bicliques, quasi--cliques and k-cores. Finding these dense substructures in the context of uncertain graphs
can be an important future direction of work.

\bibliographystyle{plain}
\bibliography{references}

\end{document}